\definecolor{darkblue}{rgb}{0,0,0.5}
\DeclareMathOperator*{\argmin}{argmin}
\newtheorem*{proposition*}{Proposition}
\newtheorem{theorem}{Theorem}
\newtheorem{lemma}{Lemma}
\newtheorem{assumption}{Assumption}
\newcommand{\R}{\mathbb{R}}
\def\tr{\color{red}}
\begin{document}

\title{Sensitivity analysis of the\\ perturbed utility stochastic traffic equilibrium\thanks{We are grateful to an anonymous referee for pointing out the connection to the machine learning literature. We thank  Jesper R.-V. Sørensen for catching a mathematical bug, the correction of which led us to o-minimal geometry.}}
\author[a]{Mogens Fosgerau\footnote{Corresponding author: \url{mogens.fosgerau@econ.ku.dk}}}
\author[a]{Nikolaj Nielsen}
\author[b]{Mads Paulsen}
\author[b]{Thomas Kjær Rasmussen}
\author[c]{Rui Yao}
\affil[a]{University of Copenhagen, Denmark}
\affil[b]{Technical University of Denmark, Denmark}
\affil[c]{Technion, Israel}
\date{\today}
\maketitle

\begin{abstract}
\textbf This paper develops a sensitivity analysis framework for the perturbed utility route choice (PURC) model and the accompanying stochastic traffic equilibrium model. We derive analytical sensitivity expressions for the Jacobian of the individual optimal PURC flow and equilibrium link flows with respect to link cost parameters under general assumptions. This allows us to determine the marginal change in link flows following a marginal change in link costs across the network. We show how to implement these results while exploiting the sparsity generated by the PURC model. 
Numerical examples illustrate the use of our method for estimating equilibrium link flows after link cost shifts, identifying critical design parameters, and quantifying uncertainty in performance predictions. Finally, we demonstrate the method in a large-scale example. The findings have implications for network design, pricing strategies, and policy analysis in transportation planning and economics, providing a bridge between theoretical models and real-world applications. 

\end{abstract}

\bigskip

\noindent \textbf{Keywords:} Sensitivity analysis; perturbed utility; route choice; stochastic traffic equilibrium

\section{Introduction}
This paper presents analytical results and computational algorithms for sensitivity analysis of the equilibrium assignment with the perturbed utility route choice (PURC) model. This allows estimation of the change in equilibrium flows that follows a change in network parameters, without having to resolve for equilibrium. 

More specifically, we provide analytical expressions for the following. 
\begin{itemize}
    \item The Jacobian of the predicted flow, in the case where link costs are constant. 
    \item The Jacobian of equilibrium \textit{link costs} with respect to network parameters, in the case where link costs are flow-dependent.
    \item The Jacobian of equilibrium \textit{link flows} with respect to network parameters, in the case where link costs are flow-dependent.
\end{itemize}

To obtain these results, we have overcome some specific challenges presented by the PURC model. It is an important and attractive feature of the PURC model that it uses the whole network as choice set, but may predict that network links are inactive, having zero flow. This means we must deal with a flow conservation constraint at each network node as well as potential non-differentiability at points where some links change between being active and inactive. 

With the analytical expressions for the Jacobians, we can approximate equilibrium outcomes in new scenarios without having to resolve for equilibrium. This is of value if we are able to compute the Jacobians faster than we can recompute the equilibrium. We provide algorithms for this purpose that exploit the sparsity that occurs when many network links in a large network are inactive.

These results have many potential applications. One is the economic appraisal of changes in network parameters. The perturbed utility framework implies a natural welfare measure, namely the maximized perturbed utility or, as it is often called, the value function. We demonstrate this is continuously differentiable as a function of link costs, with a gradient that is the link flow. This observation allows our sensitivity results to be applied directly for approximate cost-benefit analysis. 

Perhaps the most important application of sensitivity analysis is to bilevel optimization problems, e.g., network design and pricing problems \citep{yang1997traffic,josefsson2007sensitivity}, (hyper)parameter optimization and meta-learning \citep{franceschi2018bilevel,lorraine2020optimizing}, as well as deep equilibrium and implicit models \citep{bai2019deep}. Such problems are generally computationally intensive and require numerous model solutions. Our results may be used to speed up computation, thereby increasing the size of bilevel problems that can be addressed in practice. 

We set up the PURC model and the perturbed utility stochastic traffic equilibrium model in Section \ref{sec:PURC}. Our theoretical results are presented in Section \ref{sec:theoretical_results}. Section \ref{sec:numerical} illustrates the use of our theoretical results in numerical examples using toy networks. We present a prediction of equilibrium link flow patterns following changes in network parameters, without needing to resolve for equilibrium. We also show how sensitivity results can assist in the uncertainty analysis of prediction outputs when there is uncertainty about network parameters. A particular finding, illustrated with examples, is that some pairs of paths can be complements in the PURC model. This contradicts a result in a previous paper in this journal. Although it is not a focus of this paper, we take the opportunity to point out and correct the error. Finally, Section \ref{sec:large_scale} applies our main results using a real-world network. Section \ref{sec:conclusion} concludes. The following section provides a literature review.

\subsection{Literature review}


The perturbed utility route choice (PURC) model \citep{fosgerau_perturbed_2022, fosgerau_bikeability_2023} predicts the route choice behavior of an individual traveler as a link flow vector across a network that solves a convex minimization problem, defined in terms of link costs and a convex perturbation function of the link flow vector. The flow vector is constrained to satisfy flow conservation from origin to destination. Thus, the PURC model belongs to the general class of perturbed utility models \citep{fosgerau2012theory,allen2019identification}, adapted to the route choice context. The perturbation function plays an important role for two reasons. First, it directly incorporates the network structure. This determines the substitution patterns of the PURC model, which therefore take the network structure into account, without the need to extend the model to explicitly account for "correlation". Second, the perturbation function is constructed to allow corner solutions, such that the model can predict travel only on a small active subnetwork, specific to each origin-destination pair. 

The corresponding perturbed utility traffic equilibrium model is developed in \cite{yao_perturbed_2024}, who develop a fast algorithm for computing the equilibrium assignment for large problems. The basis for this is a mathematical trick, due to \citet{beckmann1956studies}, which formulates the equilibrium as the solution to a convex optimization problem involving a certain potential function. We employ the same potential function in this paper. 

We now discuss sensitivity analysis for traffic equilibrium. This quantifies the rates of change in \textit{equilibrium} traffic flows as a function of network parameters.

A key application of traffic equilibrium sensitivity is simply to estimate the change in equilibrium flow patterns that follows a change in network parameters without having to completely re-solve the traffic equilibrium problem \citep{tobin1988sensitivity, yang1997traffic, patriksson2003sensitivity}. This, in turn, is important for the welfare analysis of transportation network interventions \citep{Small1992}.

More broadly, sensitivity analysis is central in bilevel optimization problems~\citep{luo1996mathematical, josefsson2007sensitivity}, where a traffic equilibrium appears as the lower-level constraint.  
Two classic examples are capacity expansion and road pricing. In the capacity expansion problem~\citep{yang1998models, gao2004continuous}, sensitivity information points to the direction in which marginal investments have the largest impacts on equilibrium flows, guiding the design of efficient network investments. 
In the pricing problem~\citep{yan1996optimal, wang2021optimal}, sensitivity with respect to link costs enables the calculation of gradients for objectives such as social welfare or revenue, which is necessary for designing general pricing schemes beyond classic marginal pricing. 
Thus, an important motivation for sensitivity analysis is to make large-scale bilevel optimization tractable. 

Existing methods for computing sensitivities often require solving auxiliary equilibrium problems~\citep{patriksson_sensitivity_2004,josefsson2007sensitivity, lu2008sensitivity} or inverting large dense matrices~\citep{ying2005sensitivity, clark2006applications,yang2009sensitivity}, both of which are computationally demanding and difficult to scale to real-size networks. 
This scalability issue motivates the analytical results and sparse algorithms that we develop in this paper.

The technical foundation for sensitivity analysis is the differentiability of the equilibrium solution with respect to network parameters. Classical results \citep{robinson1985implicit, robinson2006strong} show that if the equilibrium conditions are continuously differentiable with an invertible Jacobian, sensitivities can be obtained via the implicit function theorem~\citep{clarke1990optimization}.
Along these lines, \cite{tobin1988sensitivity}, \cite{yan1996optimal}, and \cite{cho2000reduction} propose deriving equilibrium sensitivity on reduced networks using the implicit function theorem, under additional differentiability assumptions.
However, differentiability is not always guaranteed~\citep{patriksson_sensitivity_2004}. In deterministic equilibrium models, for instance, flows may concentrate on a limited set of minimum-cost paths, so that small parameter changes can shift the set of active links in a non-smooth way~\citep{patriksson2003sensitivity}.
Consequently, directional differentiation is required for equilibrium sensitivity analysis to network parameters~\citep{patriksson_sensitivity_2004, josefsson2007sensitivity}, and elastic demands~\citep{lu2008sensitivity}. 
Stochastic user equilibrium models based on additive random utility~\citep[e.g.,][]{dial1971probabilistic,maher1997probit,bekhor2001stochastic, baillon2008markovian, kitthamkesorn2014unconstrained, oyama2022markovian} avoid the issue of differentiability by assigning strictly positive flows to all links or paths, but at the cost of losing the ability to predict zero flows on long detour paths. The PURC model combines both features: it predicts equilibrium link flow patterns where travelers are predicted to use not only the minimum cost paths while still predicting zero flows on long detour paths \citep{fosgerau_perturbed_2022,fosgerau_bikeability_2023,yao_perturbed_2024}. The possibility of zero flows raises the issue of differentiability. We establish that the equilibrium link flows in the PURC framework are \textit{differentiable almost everywhere}, and derive analytical sensitivity expressions based on the projected first-order condition, which can be evaluated on the active subnetwork and thus supports efficient sparse computation. 
Our analytical sensitivity results are consistent with classic works such as~\cite{tobin1988sensitivity}, but our analysis and proof exploit specific properties of the PURC. We detail our analysis in the following sections.

\section{The perturbed utility route choice and traffic equilibrium models\label{sec:PURC}}

\subsection{PURC model setup}\label{sec:PURC model}

The PURC model~\citep{fosgerau_perturbed_2022, fosgerau_bikeability_2023} is defined for a network $(\mathcal{N},\mathcal{L})$, where $\mathcal{N}$ is the set of nodes with typical element $v$ and $\mathcal{L}$ is the set of directed links with typical element $ij$ for a link from node $i$ to node $j$. We assume at least one path exists between any two network nodes, i.e.,

\begin{assumption}
\label{A:1} The network $(\mathcal{N},\mathcal{L})$ is connected.
\end{assumption}

The node-link incident matrix $A\in \mathbb{R}^{|\mathcal{N}| \times |%
\mathcal{L}|}$ has entries 
\begin{align*}
a_{v, ij}=%
\begin{cases}
-1,& v=i, \\ 
1,& v=j, \\ 
0,& \text{otherwise}.%
\end{cases}%
\end{align*}
The demand for a traveler is encoded as a unit demand vector $b\in \mathbb{R} ^{|\mathcal{N}|}$ with components
\begin{align*}
b_v = 
\begin{cases}
-1, & \text{if $v$ is the traveler's origin}, \\ 
1, & \text{if $v$ is the traveler's destination}, \\ 
0, & \text{otherwise}.%
\end{cases}%
\end{align*}
A network link flow vector $x\in \mathbb{R}_+^{|\mathcal{L}|}$ satisfies flow conservation if $Ax=b$. The network link lengths are summarized in the vector $l = \{l_{ij}\}_{ij \in \mathcal{L}}\in \mathbb{R}_{++}^{|\mathcal{L}|}$.

A traveler in the PURC model is associated with a demand vector $b$, a vector of non-negative link costs $c\in \mathbb{R}_{+}^{|\mathcal{L}|}$, and link-specific convex perturbation functions $\{F_{ij}\}_{ij \in \mathcal L}$, satisfying the following assumption.

\begin{assumption}
\label{A:2} A link perturbation function, $F_{ij}:\mathbb{R}_+ \rightarrow \mathbb{R%
}_+,  ij \in \mathcal{L}$, is twice continuously differentiable, strictly convex, and strictly increasing with $F_{ij}(0)=F^{\prime }_{ij}(0)=0$, $F^{\prime \prime}_{ij}(x_{ij})>0$ for $x_{ij}\geq 0$, and range equal to $\mathbb{R}_+$. Define $(F^{\prime }_{ij})^{-1} (y)=0$ for $y<0$ such
that the inverse function of the derivative of the perturbation function has domain equal to $\mathbb{R}$.
\end{assumption}

For a flow vector $x\in \mathbb{R}_{+}^{|\mathcal{L}|}$, we define 
\begin{equation}\label{eq:sum_link_perturbations}
F(x)=\sum_{ij\in \mathcal{L}}F_{ij}(x_{ij})
\end{equation}%
for the sum across links of perturbations $F_{ij}(x_{ij})$. The link perturbations may incorporate link lengths $l_{ij}$ as in the examples below. Eq.~\eqref{eq:sum_link_perturbations} generalizes \citet{fosgerau_perturbed_2022} by allowing perturbations $F_{ij}(x_{ij})$ to be link-specific.

Specific instances of the perturbation function for use in applications include, e.g., a specification based on entropy 
\begin{equation}
F_{ij}\left( x_{ij}\right) = l_{ij} \left(\left( 1+x_{ij}\right) \ln \left( 1+x_{ij}\right)
-x_{ij}\right)  \label{eq:Fentropic}
\end{equation}%
and a quadratic specification 
\begin{equation}\label{eq:Fquad}
F_{ij}\left( x_{ij}\right) =l_{ij} x_{ij}^{2},
\end{equation}
where $l_{ij}$ is the length of link $ij$.

The PURC model predicts an individual's routing decisions as a network flow vector $x^*$ that solves an individual-level cost minimization problem of the perturbed utility form 
\begin{subequations}
\label{eq:PURC}
\begin{align}
\min_{x \geq 0}&\, c^{\top }x+F(x)
\label{eq:PURC_objective} \\
\text{s.t.}\quad & Ax=b.  \label{eq:PURC_onservation}
\end{align}
\end{subequations}
The constraint $Ax=b$ implies conservation of network flow. 

Thus, the PURC model is a utility-maximizing model, which ensures that the model predictions conform to standard rationality requirements. The PURC model connects to data with the assumption that the optimal flow vector $x^*$ is the expectation of the vector $y$, which is observed path choice encoded as a vector of zeros and ones, with ones indicating which links are included in the observed path.

As a utility-maximizing model, the PURC model admits a natural welfare measure. Let $W$ denote the negative value function,
\begin{eqnarray}\label{eq:W}
    W(v)=- \min_{Ax=b, x\geq 0}\left\{ x^\top (-v)+F(x)\right\},
\end{eqnarray}
where $v=-c$. This is the optimal perturbed utility that a traveler obtains. Theorem \ref{thm:jacobian} below presents the gradient of the value function, and the use of this result for welfare analysis of marginal changes is discussed in Section \ref{sec:welfare}.

We now derive the optimal link flow vector  $x^*$ for a given cost vector $c$ and demand $b$, which is the key ingredient to our sensitivity analysis of the PURC. We start with the Lagrangian for the individual traveler's perturbed cost minimization problem: 
\begin{equation}  \label{eq_Lagrange}
\Lambda( x,\eta) =x ^\top c + F(x) +\eta ^\top \left( Ax-b\right) ,\; x\in 
\mathbb{R}_+^{|\mathcal{L}|}, \eta\in \R^{|\mathcal{V}|}.
\end{equation}

Let $B^* = \mathrm{diag} \left( 1_{\{x^*>0\}} \right)$ be the matrix with ones on the diagonal corresponding to positive optimal link flows $x^*_{ij}$. The first-order condition for the cost minimization problem can then be written as
\begin{equation}
{B^*}\left( c+\nabla F\left( x^*\right) +A^{\top }{\eta^*}\right) =0,
\label{eq:1}
\end{equation}%
where $\nabla F(x^*) = \{F_{ij}^{\prime
}(x^*_{ij})\}_{ij\in \mathcal{L}}$ is the gradient of $F$.

Inspecting Eq. \eqref{eq:1}, we see that it comprises dual variables $\eta^*$ corresponding to the flow-conservation constraints at each node in the network.  We shall proceed to eliminate these. Let ${P^*}$ be the matrix 
\begin{equation}\label{eq:Pstar}
{P^*}={B^*}-(AB^*)^+A B^*,
\end{equation}%
where $(A{B^*})^{+}$ is the Moore-Penrose inverse of $A B^*$. Lemma \ref{lem:P_hat} in the Appendix shows that $ P^*$ is the orthogonal projection onto the linear subspace
\begin{eqnarray}\label{eq:subspace}
\{ x\in \mathbb R^{|\mathcal{L}|}:A x =0, B^* x =x\},
\end{eqnarray}
i.e. the intersection of the null space of $A$ and the subspace where the components corresponding to inactive links are zero. Pre-multiplying Eq. \eqref{eq:1} by ${P^*}$ yields (see again Lemma \ref{lem:P_hat}) the projected first-order condition
\begin{equation}\label{eq:projectedFOC}
{P^*}(c+\nabla F(x^*))=0.
\end{equation}

\subsection{Perturbed utility stochastic traffic equilibrium model}
The perturbed utility stochastic traffic equilibrium model~\citep{yao_perturbed_2024} extends PURC to the case where the collective routing decisions of individuals influence link costs. Specifically, the stochastic traffic equilibrium model formulates the equilibrium problem as an optimization problem:
\begin{subequations}
\label{eq:PURC_SUE}
\begin{align}
\min_{x \geq 0}&\, \sum_{{ij} \in \mathcal{L}}\int_0^{\sum_{w\in\mathcal{W}} q^w x_{ij}^w} \zeta _{ij}(u)du+ \sum_{w\in\mathcal{W}} q^w F^w(x^w) \label{eq:PURC_SUE_obj}\\
\text{s.t.}\quad & (Ax^w - b^w) = 0, \; \forall w \in \mathcal{W},
\end{align}
\end{subequations}
where $\mathcal{W} = \{w\}$ is the set of traveler \textit{types}, $x^w = (x_{ij}^w)_{{ij}\in\mathcal{L}}$ is the set of link flows for type $w$,  $x= \sum_{w\in \mathcal{W}} q ^w x^w$ is the aggregate link flow vector, $\zeta_{ij}(\cdot)$ is the link cost function, $q^w \in \mathbb{R}_{++}$ is the total demand for each type $w$, $F^w$ is the perturbation function for type $w$, and $b^w$ is a unit demand vector that encodes the OD for type $w$. With these definitions, the term $q^w x_{ij}^w$ is the flow on link $ij$ of travelers of type $w$. 

The equilibrium problem has a unique solution if link cost functions are monotone~\citep{yao_perturbed_2024}. Hence, we make the following assumption on link cost functions, which will be convenient for our sensitivity analysis presented in the next section.

\begin{assumption}\label{A:3}
    The link cost function $\zeta_{ij}:\mathbb{R} \rightarrow \mathbb{R}_{++}, \forall {ij} \in \mathcal{L}$ is a continuously differentiable,  strictly increasing function of aggregate link flows $x_{ij} \in \mathbb{R}$.
\end{assumption}

Assumption~\ref{A:3} implies that the link cost functions have inverses and hence that link flows corresponding to any cost $c_{ij}$ can be expressed in terms of the inverse cost function:
\begin{align}
    \zeta_{ij}^{-1}(c_{ij}) = x_{ij}, \; \forall {ij} \in \mathcal{L}.
\end{align}

Therefore, we can write the equilibrium condition as 
\begin{equation}\label{eq:prop_dual_fixed_point}
        \zeta_{ij}^{-1}(c^*_{ij}) = x^*_{ij}(c^*), \; \forall ij \in \mathcal{L},
    \end{equation}
where $x^*_{ij}(c) = \sum_{w\in\mathcal{W}} q^w x_{ij}^{w*}(c)$ is the aggregate PURC optimal flow on link $ij$ given cost $c$. We will use this relationship to connect the sensitivity analysis of traffic equilibrium to the sensitivity analysis of the PURC optimal flows.

Our setup thus allows heterogeneous traveler types. To unburden the notation, we will not carry the potential heterogeneity of the perturbation function through the analysis. Hence, from this point, we will assume $F^w=F,w\in \mathcal{W}$. 

We note that it is straightforward to allow heterogeneous link costs of an additively separable form, $\tilde \zeta_{ij}^w=c^w_{ij} + \zeta_{ij}$.

Differentiating the objective~\eqref{eq:PURC_SUE_obj} with respect to type-specific link flows $x_{ij}^w$, we obtain the type-specific perturbed link cost as 
\begin{equation*}
    q^w(\zeta_{ij}(x_{ij}) +\nabla F_{ij}(x_{ij}^w)),
\end{equation*} 
where the first term in the parenthesis reflects congestion due to collective routing decisions $x_{ij} = \sum_{w\in\mathcal{W}} q^w x_{ij}^w$, and the second term is the type-specific (marginal) perturbation cost $\nabla F_{ij}(x_{ij}^w)$, representing, for example, travelers' taste for variety.

\section{Sensitivity analysis}\label{sec:theoretical_results}

\subsection{Flow-independent case: perturbed utility route choice}\label{sec:main_result}

The PURC model has the realistic feature that many links may have zero flow, i.e.\ they may be inactive. This implies that flow can change in a non-smooth way as links may change between being active and inactive as the cost vector changes. To help handle that issue, we define the \textit{activation boundary},
\begin{eqnarray}\label{eq:C0}
    \mathcal{C}_0=\left\{ c:1_{x^*(c)>0} \textrm{\; is not continuous at $c$} \right\},
\end{eqnarray}
as the set of cost vectors $c$ where a marginal change in costs may cause the set of active links to change.

Using the projection matrix $ P^*$ defined in Eq. \eqref{eq:Pstar}, we now establish our first main result, showing that the optimal flow is equal to the gradient of the value function. Outside the activation boundary, we show the optimal flow is continuously differentiable, and we give an analytical expression for the Jacobian of the optimal flow vector differentiated with respect to the link cost vector. This Jacobian comprises all information about how demands respond to marginal cost changes.

Finally, we show that the activation boundary is small, of Lebesgue measure zero. For this result, we require a regularity assumption, namely that the perturbation function  $F$ is \emph{definable} \citep{van1998tame}. Functions defined in terms of polynomials, exponentials and logarithms are definable, and hence the entropic \eqref{eq:Fentropic} and quadratic \eqref{eq:Fquad} perturbation functions are examples of definable functions. We give a brief introduction to definability in Appendix \ref{app:definable}.

\begin{theorem}\label{thm:jacobian} Under Assumptions \ref{A:1} and \ref{A:2}, 
\begin{enumerate}
    \item  The value function is continuously differentiable and satisfies
\begin{eqnarray}\label{eq:Roys}
    \nabla W(v)= x^*(-v).
\end{eqnarray}
In particular, $x^*$ is continuous.
    \item $x^*$ is continuously differentiable on the complement of $\mathcal{C}_0$ with symmetric negative semidefinite Jacobian
    \begin{eqnarray}\label{eq:eqngradient}
        \nabla x^*(c)=-\left(P^*\nabla^2F(x^*)P^*\right)^+.
    \end{eqnarray}
    \item If $F$ is definable, then $\mathcal{C}_0$ has Lebesgue measure zero.
\end{enumerate}
\end{theorem}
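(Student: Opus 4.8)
The plan is to treat the three parts in order, exploiting convex duality and the first-order characterization \eqref{eq:projectedFOC}. For part (1), I would observe that the PURC program \eqref{eq:PURC} is a parametric convex minimization problem in which $-c$ enters linearly in the objective; hence $W(-c)$ is the conjugate-type value function of a convex program, and standard convex analysis (Danskin/envelope theorem, or the fact that the subdifferential of a value function is the set of optimal primal variables) gives $\partial W(-c) = \{x^*(c)\}$ once uniqueness of $x^*$ is known. Uniqueness of $x^*$ follows from strict convexity of $F$ (Assumption \ref{A:2}) on the polyhedral feasible set $\{Ax=b,\,x\ge 0\}$, which is nonempty by Assumption \ref{A:1}. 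A single-valued subdifferential forces differentiability, so $W$ is $C^1$ with $\nabla W(-c) = x^*(c)$; continuity of $x^*$ is then immediate since $\nabla W$ is continuous (indeed $W$ is convex and differentiable, so $\nabla W$ is continuous).

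For part (2), fix $c \notin \mathcal{C}_0$. On a neighborhood of such $c$ the active set $B^* = \mathrm{diag}(1_{\{x^*(c)>0\}})$ is locally constant, so on that neighborhood the optimal flow solves the \emph{equality-constrained} smooth system \eqref{eq:projectedFOC}, namely $P^*(c + \nabla F(x^*)) = 0$ together with $x^* = B^* x^*$ and $Ax^* = b$. I would differentiate \eqref{eq:projectedFOC} with respect to $c$: since $P^*$ is locally constant, this yields $P^*\left(I + \nabla^2 F(x^*)\,\nabla x^*\right) = 0$, and combining with the constraint that $\nabla x^*$ maps into the subspace \eqref{eq:subspace} — i.e. $P^*\nabla x^* = \nabla x^*$ and $P^* (\nabla x^*)^\top = (\nabla x^*)^\top$ by symmetry — gives $P^*\nabla^2 F(x^*) P^* \nabla x^* = -P^*$, which I solve on the subspace using the Moore–Penrose inverse to obtain \eqref{eq:eqngradient}. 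The matrix $P^*\nabla^2 F(x^*)P^*$ is symmetric PSD with range exactly the subspace \eqref{eq:subspace} (here $\nabla^2 F(x^*)$ is diagonal and strictly positive by Assumption \ref{A:2}, so it is positive definite on that subspace), which is what makes the pseudoinverse the right object and yields a symmetric negative semidefinite $\nabla x^*$. The actual differentiability (not merely a formal expression) I would justify via the implicit function theorem applied to the reduced square system on the locally-fixed active coordinates, where the relevant Jacobian block is invertible precisely because $\nabla^2 F$ is positive definite there; alternatively this is the route \citet{patriksson_sensitivity_2004} takes and I would cite that lineage.

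For part (3), I need $\mathcal{C}_0$ to have Lebesgue measure zero. The plan is: by part (1), $x^*(c) = \nabla W(-c)$ with $W$ convex, so $x^*$ is a monotone (gradient-of-convex) map, hence differentiable a.e.\ by Rademacher/Alexandrov-type theorems; and at any point of differentiability of $x^*$ the indicator $1_{\{x^*(c)>0\}}$ is locally constant (a coordinate that is $0$ with nonzero derivative would make $x^*$ leave the nonnegative orthant, contradicting feasibility; a coordinate that is $0$ with zero derivative, i.e.\ a "degenerate" boundary crossing, occupies a null set — this is the delicate point). More robustly, I would partition $\R^{|\mathcal L|}_{++}$ according to which subset $S \subseteq \mathcal L$ equals the support of $x^*(c)$; for each fixed $S$ the cost vectors giving that support that also lie on $\mathcal{C}_0$ must satisfy $x^*_{ij}(c) = 0$ for some $ij$ at the boundary, cut out by the real-analytic-in-the-smooth-region equation from \eqref{eq:projectedFOC}, hence a lower-dimensional set; a countable union of such null sets is null. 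The main obstacle is exactly this part (3): making rigorous that the "boundary" cost vectors where a link flow touches zero tangentially form a measure-zero set, since a priori the support could oscillate on a fat set. I expect the cleanest argument routes through monotonicity of $x^* = \nabla W(-c)$ plus the observation that $\mathcal{C}_0$ is contained in the set where $x^*$ fails to be differentiable (which is null) union a countable union of sets of the form $\{c : x^*_{ij}(c) = 0\}$ restricted to regions where $x^*_{ij}$ is not identically zero, each of which is null because on such a region $x^*_{ij}$ is real-analytic and not identically zero, or because its gradient is nonzero a.e.\ there.
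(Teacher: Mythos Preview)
Your treatment of parts (1) and (2) is essentially the paper's own: convex conjugacy plus uniqueness of the minimizer for (1), and differentiation of the projected first-order condition together with the implicit function theorem on the active subnetwork for (2). One point worth tightening: in your derivation of \eqref{eq:eqngradient} you invoke symmetry of $\nabla x^*$ (to get $P^*(\nabla x^*)^\top=(\nabla x^*)^\top$) before you have a source for it. The paper obtains symmetry up front by applying Alexandrov's theorem to the convex function $W$, so that wherever $x^*=\nabla W(-\cdot)$ is differentiable its Jacobian is a Hessian and hence symmetric; it then verifies the four Moore--Penrose conditions for the pair $X=\nabla x^*$, $Y=-P^*\nabla^2F(x^*)P^*$ using $YX=P^*$ (from differentiating \eqref{eq:projectedFOC}), $P^*X=X$ (from differentiating the constraints), and this symmetry. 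Your route is the same once you plug that in.

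For part (3), your first plan coincides with the paper's: Alexandrov gives $x^*$ differentiable almost everywhere, and the paper then concludes in one sentence that $\mathcal{C}_0$ is null. You are right that the ``degenerate boundary crossing'' case---a coordinate with $x^*_{ij}(c)=0$, $\nabla x^*_{ij}(c)=0$, yet $x^*_{ij}>0$ arbitrarily nearby---is the subtle point, and the paper's one-line conclusion does not spell out why differentiability of $x^*$ at $c$ forces $c\notin\mathcal{C}_0$ in that case. Your alternative stratification-by-support argument is a reasonable instinct, but the real-analyticity step is not available under Assumption~\ref{A:2}, which only gives $F\in C^2$; if you pursue that route you would need a different mechanism (e.g.\ a monotonicity or semialgebraic-type structure, or an additional smoothness hypothesis on $F$) to conclude that each boundary stratum is Lebesgue-null.
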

\begin{proof}
 \textbf{Item 1.} Let $\mathcal{D}$ denote the feasible set $\mathcal{D}=\left\{x: Ax=b,\; x\geq 0\right\}$ and let $G(x)=F(x)$ for $x\in \mathcal{D}$ and $G(x)=\infty$ otherwise. Then with $v=-c$,
    \begin{eqnarray}
        W(v)=\sup_{x\in \mathbb R^{|\mathcal{L}|}}\left\{x^\top v-G(x)\right\}.
    \end{eqnarray}
A sufficient condition for a function $G$ to be \textit{essentially strictly convex} \citep{rockafellar_convex_1970} is that it is strictly convex on every convex subset of its domain $\mathcal{D}$. By Assumption \ref{A:2}, $F$ is strictly convex on $\R^{|\mathcal{L}|}_+$, such that $G$ is strictly convex on $\mathcal{D}$ since $\mathcal{D}\subset \R^{|\mathcal{L}|}_+$ and $F=G$ on $\mathcal{D}$. 

A convex function $f:\R^d\rightarrow \R$ is \textit{essentially smooth} \citep{rockafellar_convex_1970} if \textit{i)} the interior of its domain is nonempty, \textit{ii)} it is continuously differentiable on the interior of its domain and \textit{iii)} $\underset{k\rightarrow \infty}{\lim} ||\nabla f(z_k)||=\infty$ for any sequence $(z_k)$ in the interior of the domain converging to a boundary point of the domain. Theorem 26.3 in \citet{rockafellar_convex_1970} then establishes that $W$ is essentially smooth, since $W$ is the convex conjugate of $G$. 

Since $W$ is essentially smooth, it is continuously differentiable on the interior of its domain. The interior of the domain includes the negative orthant since for any $v<0$,
\begin{eqnarray}
    W(v)=\sup_{x\in \mathcal{D}}x^\top v-F(x)\leq \sup_{x\geq 0} x^\top v -F(x)\leq \sup_{x\geq 0}-F(x) =0,
\end{eqnarray}
where we have used that $F(x)\geq 0$ and that $x^\top v<0$ for $x\geq 0$ and $v<0$. Applying \citet{milgrom_envelope_2002} Theorem 1 to the directional derivatives of $W$ then shows that $\nabla W(v)=x^*(-v)=x^*(c)$ for $v=-c<0$.


\textbf{Item 2.} By Alexandrov's Theorem \citep{Alexandroff1939}, $W$ is twice differentiable almost everywhere. Since $\nabla W(v)=x^*(-v)$ for all $v\in \R^{|\mathcal{L}|}_-$, we have that $x^*$ is differentiable almost everywhere with symmetric negative semidefinite Jacobian $\nabla x^*(-v)=-\nabla^2 W(v)$.

Now, let $c \not \in \mathcal C_0$ where $\mathcal{C}_0$ was defined in \eqref{eq:C0}. It follows from the projected first order condition \eqref{eq:projectedFOC} that $x^*(P^*c)=x^*(c)$. 
Let $ij$ denote a link such that $x^*_{ij}(c)=0$. Since $(P^*c)_{ij}=0$, it follows that $x^*(c)$ is differentiable with respect to $c_{ij }$ with $\frac{\partial x^*(c)}{\partial c_{ij}}=0$. By assumption $c\not \in \mathcal{C}_0$, such that $1_{x^*(c)>0}$ is constant in a neighborhood of $c$. As $ij$ is a link with $x^*_{ij}(c)=0$ it follows that $x^*_{ij}(c')=0$ for any $c'$ sufficiently close to $c$. It then follows that $\frac{\partial x^*_{ij}(c)}{\partial c}$ exists and equals zero. It remains to be shown that $\frac{\partial x^*_{ij}(c)}{\partial c_{i'j'}}$ exists when $ij$ and $i'j'$ are both active links. 

Given the above, we focus on the nonzero flows as a function of the corresponding costs. To simplify notation, we therefore assume all links are active. The optimal flow $x^*(c)$ is then the solution to $
    \min_{x,\eta} x^\top c+F(x)+\eta^\top (Ax-b)$ which by Lemma \ref{lem:fullrank} is equivalent to $    \min_{x,\eta} x^\top c+F(x)+\eta^\top(Cx-d)$, 
where $CC^\top$ is invertible. $x^*(c)$ is therefore the unique solution to the minimization problem
\begin{eqnarray}
    x^*(c)=\arg \min_{Cx=d} x^\top c+F(x)
\end{eqnarray}
with Lagrangian
\begin{eqnarray}
    L(x,\eta)=x^\top c+F(x)+\eta^\top(Cx-d)
\end{eqnarray}
and KKT conditions
\begin{eqnarray}
    c+\nabla F(x)+C^\top \eta &=& 0 \\
    Cx-d&=&0.
\end{eqnarray}
Let $z=(x^\top,\eta^\top)^\top\in \R^{|\mathcal{L}|+|\mathcal{V}|}$ and define
\begin{eqnarray}
    \Phi(z,c)=\begin{pmatrix}
        c+ \nabla F(x)+C^\top \eta\\
    Cx-d\end{pmatrix}.
\end{eqnarray}
Then
\begin{eqnarray}
    \nabla_z \Phi(z,c)=\begin{pmatrix}
        \nabla^2 F(x) & C^\top \\
        C & 0
    \end{pmatrix}.
\end{eqnarray}


By the properties of the Schur complement,
\begin{eqnarray}
    \det\left(\nabla_z \Phi(z,c)\right)= \det(\nabla^2 F(x) )\det\left(- C [\nabla^2 F(x)]^{-1}C^\top\right)\neq 0,
\end{eqnarray}
since $C [\nabla^2 F(x)]^{-1} C^\top$ has full rank. By construction, for any $v\in \mathbb R^{|\mathcal{V}|}$ with $v\neq 0$ we have $C^\top v\neq 0$, such that in particular $v^\top C\nabla^2F (x)^{-1} C^\top v = (C^\top v)^\top \nabla^2 F(x)^{-1} (C^\top v)$, which is nonzero since $\nabla^2 F(x)^{-1}$ is positive definite and $C^\top v\neq 0$. It now follows that $z^*(c)$ is continuously differentiable in a neighborhood of $c$ by the implicit function theorem \citep[][Theorem 3.3.1]{krantz_implicit_2003}.

We now derive the expression for $\nabla x^*(c)$. Differentiating both sides of the projected first-order condition \eqref{eq:projectedFOC} yields $P^* \nabla^2 F(x^*) \nabla_c x^*(c)=- P^*$,
where we have determined already that $P^*\nabla^2 F(x^*)=P^*\nabla^2 F(x)^*P^*$, such that this becomes
\begin{eqnarray}\label{eq:diff_projected_foc}
   P^* \nabla^2 F(x^*)P^* \nabla_c x^*(c)=- P^*.
\end{eqnarray}
For ease of notation, let $X=\nabla x^*(c)$ and $Y=-P^*\nabla^2 F(x^*)P^*$. Then we have that $X,Y$ are symmetric and $P^*Y=Y$. Further, by differentiating both sides of constraint $A B^* x^* = b$ with respect to $c$, we have that $A B^* X = 0$, implying that $P^* X = X$. Eq.~\eqref{eq:diff_projected_foc} becomes $YX=P^*$, which shows that $YX$ is symmetric. Transposing both sides yields $XY=P^*$ such that $XY$ is symmetric. By pre- and postmultiplication respectively, we find  $XYX=X$ and $YXY=Y$ such that $X$ is the Moore-Penrose inverse of $Y$, i.e.
\begin{eqnarray}\label{eq:derivative}
    \nabla x^*(c)=-\left( P^*\nabla^2 F(x^*)P^*\right)^+,
\end{eqnarray}
as claimed.


\textbf{Item 3.}
By Item 1, $x^*$ is the unique solution to the PURC problem \eqref{eq:PURC}, and each coordinate function $x^*_l$ is continuous. Then each set 
\[
S_l \coloneqq \{c\in\mathbb R^{|\mathcal L|}:\ x^*_l(c)>0\},l\in \mathcal L
\]is open. Since $S_l$ is open, the indicator function $ 1_{x^*_l(\cdot)}$ is discontinuous at $c$ iff $c\in \mathrm{bd}(S_l)$. 
The set of points of discontinuity of the multivariate indicator function defining the activation boundary $1_{x^*(\cdot)>0}$ is exactly the set of points where one coordinate function $ 1_{x^*_l(\cdot)}$ is discontinuous, hence 
\[
\mathcal{C}_0=\bigcup_{l\in \mathcal{L}}\mathrm{bd}(S_l).
\]
Thus, it suffices to show that each $\mathrm{bd}(S_l), l\in \mathcal L$ is a  Lebesgue null set.

By Lemma \ref{lem:definable}, the optimal flow function $x^*$ is definable. Hence, each set $S_l$ is definable. Hence, the boundary  $\mathrm{bd}(S_l)=\mathrm{cl}(S_l)\setminus S_l$ is definable \citep[][Prop. 1.12]{coste1999introduction}.

By the cell decomposition theorem \citep[][Thm. 6.6]{coste1999introduction}, any definable set $M \in \mathbb{R}^{|\mathcal{L}|}$ is the finite union of cells, where each cell is a $C^1$ submanifold on $\mathbb{R}^{|\mathcal{L}|}$~\citep[][Ch. 6.2]{coste1999introduction}. The o-minimal dimension $\dim_{\mathrm{om}}$~\citep[][Ch. 4]{van1998tame} is defined as
\[
\dim_{\mathrm{om}}(M) = \max_i \dim(\mathrm{cell}_i),
\]
where $\dim(\mathrm{cell}_i)$ is the topological dimension of $\mathrm{cell}_i$. Our motivation for imposing definability is that the boundary of a definable set has lower dimension  \citep[][  Corollary 1.10]{van1998tame}: 
\[
\dim_{\mathrm{om}}(\mathrm{bd}(S_l)) < |\mathcal{L}|.
\]
Then we see that the boundary $\mathrm{bd}(S_l)$ is a Lebesgue null set. More specifically, a $\mathrm{cell}$ that is a $d$-dimensional $C^1$ submanifold  has locally finite Hausdorff measure.
Since $d < |\mathcal{L}|$, each cell has zero $|\mathcal{L}|$-dimensional Hausdorff measure~\citep[][Ch. 2.10]{federer2014geometric}. Hence, each cell is a  Lebesgue null set in  $\mathbb{R}^{|\mathcal{L}|}$~\citep[][Thm 2.10.35]{federer2014geometric}.
Hence, $\mathrm{bd}(S_l)$, as the union of finitely many cells, is a Lebesgue null set as required. 
\end{proof}

We provide now some remarks on Theorem \ref{thm:jacobian}. Item 1 in the theorem is the analog of the Williams–Daly–Zachary Theorem for discrete choice additive random utility models  (ARUM) \citep{mcfadden_econometric_1981,Sorensen2019}, which states that the gradient of the expected maximum utility is equal to the choice probability vector. The ARUM has an equivalent representation as a perturbed utility maximization problem similar to \eqref{eq:W}.


We note that the Jacobian $\nabla_c x^*(c)$ relates to all links in the network, including those with zero flow. Issues of non-differentiability occur only on the activation boundary where some links change between zero and positive flow. Using the convex conjugate of the perturbation function $F$, our theorem establishes continuous differentiability of the flow vector on the complement of the activation boundary. Moreover, we show that the activation boundary has Lebesgue measure zero, using again the properties of the convex conjugate. 

An alternative approach to establishing sensitivity results for the PURC would have been to use \citet[][Thm. 4]{tobin1988sensitivity} as a starting point. We have not pursued this approach, but under the strict complementarity and linear independence constraint qualification conditions assumed there, the equilibrium would lie in the complement of the activation boundary $\mathcal{C}_0$. In that case, the equilibrium conditions reduce to a smooth system and would have led us to the same Jacobian as in item 2 of Theorem \ref{thm:jacobian}. 
In this sense, our activation boundary condition plays a role closely analogous to strict complementarity or non-degeneracy in the classical literature.

Our analysis, however, differs primarily in its formulation: rather than working with primal–dual optimality conditions as in the classic works, our theorem builds on the projected first-order condition. This approach avoids the introduction of dual variables and allows differentiability to be characterized directly in terms of the predicted flow, an object of primary interest.


It is not a new insight that the set of points of non-differentiability is small. \citet[][Ch. 7.1]{boyles2020transportation} state (without proof) that there is a finite number of degenerate points in classical user equilibrium. 
Their argument is similar in spirit to our Lebesgue measure zero result, and our proof strategy could be applied in their case. 

The dual variables $\eta^*$ (node potentials) can be recovered from the first-order condition \eqref{eq:1}: for any active link $ij$ with $x^*_{ij} > 0$, we have $c_{ij} + F'_{ij}(x^*_{ij}) + \eta^*_j - \eta^*_i = 0$. Given the optimal flow $x^*$ and costs $c$, the dual variables are obtained by solving this linear system on the active subnetwork, after normalizing the potential at one node. The sensitivity of the (active) dual variables with respect to cost parameters then follows by differentiating \eqref{eq:1} with respect to $c$, using the Jacobian $\nabla_c x^*(c)$ from Theorem~\ref{thm:jacobian}.

In the next subsection, we will apply Theorem \ref{thm:jacobian} to sensitivity analysis of perturbed utility traffic equilibrium. Later, in Section~\ref{sec:numerical}, we apply Theorem \ref{thm:jacobian} to analyze the substitution pattern of the PURC, showing in particular that complementarity may occur. Something which is ruled out by ARUM discrete choice models.

\subsection{Flow-dependent case: stochastic traffic equilibrium}\label{sec:thm2}

In the previous section, we presented analytical sensitivity results for the PURC model for the case where link costs are independent of link flows. We now extend the analysis to the case of perturbed utility stochastic traffic equilibrium, where link costs are flow-dependent. 

We consider link cost functions parametrized by $\theta \in \Theta$ of arbitrary dimension,  denoted as $\zeta_{ij}(x_{ij}; \theta_{ij})$. The equilibrium link flows are parameterized by type-specific demands $q=(q^w)_{w \in \mathcal{W}}$, and we consequently denote the equilibrium link flows as $x^*_{ij}(c_{ij}^*; q)$ and the inverse link cost functions as $\zeta^{-1}_{ij}(c_{ij}; \theta_{ij})$. With this notation, the equilibrium condition \eqref{eq:prop_dual_fixed_point} becomes
\begin{equation}\label{eq:parametric_fixed_point}
   \zeta_{ij}^{-1}(c^*_{ij}; \theta_{ij}) = x^*_{ij}(c_{ij}^*; q), \forall ij \in \mathcal{L}.
\end{equation}

For notational ease, let $y = (\theta, q) \in \mathbb{R}^{|\mathcal{L}| + |\mathcal{W}|}$ denote the vector of parameters, with  
\begin{equation}
    G_{ij}(y, c) = \zeta_{ij}^{-1}(c^*_{ij}; \theta_{ij}) - x^*_{ij}(c_{ij}^*; q),
\end{equation} 
and vector-valued function $G(y, c) = (G_{ij}(y, c))_{ij \in\mathcal{L}}$.
We obtain the following result.
\begin{theorem}
    \label{thm:jacobian_equilibrium_cost}
    Suppose Assumptions \ref{A:1}-\ref{A:3} hold and that $U$ is an open subset of $\Theta \times \mathbb{R}^{|\mathcal{W}|}$ containing some parameter $y=(\theta, q) \in \Theta \times \mathbb{R}^{|\mathcal{W}|}$, such that there exists link costs $c^*$  satisfying the equilibrium condition $G(y, c^*) = 0$, and such that $\nabla_c x^{w*}(c^*)$ exists for each type $w\in\mathcal{W}$.  
    
    Then there exists a continuously differentiable function $c^*(y)$ for $y\in U$. Further, its Jacobian, $\nabla c^*(y)$  is
    \begin{align}
        \nabla c^*(y) = -\left[\nabla_{c} \zeta^{-1}(c^*; \theta) -  \nabla_c x^{*}(c^*; q)\right]^{-1} \nabla_{y} G(y, c^*), y \in U,\label{eq:jac_eq_cost}
    \end{align}
    where $$\nabla_y G(y, c^*)= \left[\nabla_\theta \zeta^{-1}(c^*; \theta) \middle| - \nabla_q x^*(c^*; q) \right] \in \mathbb{R}^{|\mathcal{L}| \times |y|} $$
    with $$\nabla_q x^*(c^*; q) = (x^{w*}(c^*))_{w \in \mathcal{W}}.$$
\end{theorem}
\begin{proof} By our assumption, $G(y, c)$ is differentiable at $(y, c^*)$. Differentiating $G(y, c^*)$ with respect to $c$ reads:
\begin{equation}\label{eq:equilibrium_jac_c}
    \nabla_c G(y, c^*) = \nabla_{c} \zeta^{-1}(c^*; \theta) -  \nabla_{c}x^{*}(c^*; q).
\end{equation}
By the implicit function theorem, the function $c^*: U \rightarrow \Theta \times \mathbb{R}^{|\mathcal{W}|}$ exists and is continuously differentiable if the Jacobian $\nabla_c G(y, c^*)$ is invertible.
$\nabla_{c} \zeta^{-1}(c^*; \theta)$ is a positive definite diagonal matrix by Assumption \ref{A:3}. 
    By Theorem~\ref{thm:jacobian}, we have that the Jacobian of PURC optimal link flows
    \begin{equation*}
        \nabla_{c} x^{w*}(c^*)=-\left( P^{w*}\nabla^2 F(x^{w*}(c^*))P^{w*}\right)^+,
    \end{equation*} is symmetric negative semi-definite.
    Hence, we have the Jacobian of the aggregated link flow 
    \begin{align*}
        \nabla_{c} x^*_{ij}(c^*; q) = \sum_{w\in\mathcal{W}} q^w  \nabla_{c}  x_{ij}^{w*}(c^*)
    \end{align*}
    is also symmetric negative semi-definite. Then $\nabla_c G(y, c^*)$ is positive definite and thus invertible. Consequently, the Jacobian of the equilibrium link costs function, $\nabla c^*(y)$ for $y \in U$ takes the following form:
    \begin{align}
        \nabla c^*(y) = - [\nabla_c G(y, c^*)]^{-1} \nabla_y G(y, c^*) = -\left[\nabla_{c} \zeta^{-1}(c^*; y) -  \nabla_c x^{*}(c^*; q)\right]^{-1} \nabla_y G(y, c^*),
    \end{align}
    where we have 
    $
\nabla_y G(y, c^*)= \left[\nabla_\theta \zeta^{-1}(c^*; \theta) \middle| - \nabla_q x^*(c^*; q) \right] \in \mathbb{R}^{|\mathcal{L}| \times |y|}$,
    with $\nabla_q x^*(c^*; q) = [\nabla^\top_q x_{ij}^*(c^*; q)]^\top_{ij \in \mathcal{L}}$ and $\nabla_q x_{ij}^*(c^*; q) = (x_{ij}^{w*}(c_{ij}^*))_{w \in \mathcal{W}}$.
\end{proof}

The proof of Theorem~\ref{thm:jacobian_equilibrium_cost} carries through even if the separability implicit in Assumption~\ref{A:3} is relaxed. Indeed, we only need link cost functions to be strictly monotone in aggregate link flows in the sense that $\nabla_{x} c(x^*; \theta)$ must be positive definite, which also ensures the existence of the inverse function $\zeta^{-1}(c; \theta)$ and positive definiteness of $\nabla_c \zeta^{-1}(c^*; \theta)$. 

Furthermore, Theorem \ref{thm:jacobian_equilibrium_cost} implies that equilibrium link costs are  functions of parameters $y$, that is $c^*(y)$.  Hence, the aggregate equilibrium link flows are also implicit functions of the parameters $y$, i.e., $x^*(c^*(y); q) = \sum_{q\in\mathcal{W}} q^w x^{w*}(c^*(y))$.
We can then compute the sensitivity of equilibrium link flows using the chain rule. We state this as a theorem for easy reference.
\begin{theorem}
   Suppose conditions in Theorem~\ref{thm:jacobian_equilibrium_cost} hold, the Jacobian of equilibrium link flows with respect to parameters $y$, $\nabla_y x^{w*}(c(y))$ and $\nabla x^*(y)$ respectively, are
    \begin{align}
        \nabla_y x^{w*}(c^*(y)) &= \nabla_c x^{w*}(c^*) \nabla_y c^*(y), \label{eq:type_specific_sensitivity}\\
        \nabla_y x^{*}(c^*(y)) &= \nabla_y x^*(c^*; y)+\sum_{w\in\mathcal{W}}q^w \nabla_y x^{w*}(c^*(y)),\label{eq:total_sensitivity}
    \end{align}
    where $\nabla_y x^*(c^*; y) = \left[\mathbf{0} \middle| \nabla_q x^*(c^*; q) \right] \in \mathbb{R}^{|\mathcal{L}|\times|y|}$.
\end{theorem}
\begin{proof}
    Under the conditions in Theorem~\ref{thm:jacobian_equilibrium_cost}, $x^{w*}(c^*(y))$ is differentiable at $y$, and so is $x^*(c^*(y); y)$. Therefore, $\nabla_y x^{w*}(c^*(y)) = \nabla_c x^{w*}(c^*) \nabla_y c^*(y)$, by the chain rule. Similarly,
    \begin{align*}
        \nabla_y x^*(c^*(y)) &= \nabla_y x^*(c^*; y) + \nabla_c x^*(c^*; y) \nabla_y c^*(y) \\
        &= \nabla_y x^*(c^*; y) + \sum_{w \in \mathcal{W}} q^w \nabla_c x^{w*}(c^*) \nabla_y c^*(y) \\
        &=\nabla_y x^*(c^*; y) + \sum_{w\in\mathcal{W}}q^w \nabla_y x^{w*}(c^*(y))
    \end{align*} and $\nabla_y x^*(c^*; y) = \left[\mathbf{0} \middle| \nabla_q x^*(c^*; q) \right]$.
\end{proof}

We note that, by exploiting the specific properties (strict convexity, separability, and differentiability, as detailed in Assumption~\ref{A:2}) of the perturbation function, Eq.~\eqref{eq:type_specific_sensitivity}-\eqref{eq:total_sensitivity} provide closed-form expressions for the Jacobian of the equilibrium link flows for all parameters $y$ at once. 

\subsection{Welfare analysis of marginal changes}\label{sec:welfare}

We now discuss the application of item 1 in Theorem \ref{thm:jacobian} to welfare analysis. We have in mind situations where it is desired to assess the welfare consequences of an intervention in the transport system, which is perhaps relatively minor, such that the effort required to recompute the model for the case after the intervention might seem disproportionately large. In that case, it might be useful to be able to approximate the welfare change without recomputing the model. 

The value function \eqref{eq:W} measures the utility achieved by travelers of a certain type following a change in the link cost vector. Hence, the gradient of the value function in \eqref{eq:Roys} can be used to compute the approximate welfare change following a marginal change in link costs.

It is possible to apply a first-order Taylor approximation, whereby
    \[\Delta W(-c) \simeq x^*(c) \Delta c.\] 
This simply approximates the welfare change holding the flow vector constant.
    
The next step is to account also for the approximate change in the flow vector following a cost change. We can achieve this by a second-order Taylor approximation 
\begin{align}\label{eq:welfare_approx}
    \Delta W(-c) \simeq x^*(c) \Delta c + \frac 1 2 \Delta c^\top \nabla x^*(c) \Delta c.
\end{align}
This is reminiscent of the well-known rule-of-a half, commonly applied in benefit-cost analysis of transport system interventions \citep{fosgerau_rule---half_2021}.

The individual traveler's value function \eqref{eq:W} depends only on the cost vector, regardless of whether it is flow-dependent or not. 
For the flow-dependent case of equilibrium described in Section \ref{sec:thm2}, we can use the chain rule, combining \eqref{eq:Roys} with \eqref{eq:jac_eq_cost} in Theorem \ref{thm:jacobian_equilibrium_cost}, to estimate welfare changes due to changes in network parameters.

\subsection{Algorithms for computing sensitivity}

\subsubsection{Computation of sensitivity of PURC} 

The primary challenge in computing the sensitivity of the optimal PURC link flows is evaluating the Jacobian in Eq.~\eqref{eq:eqngradient}. 
The naive approach, computing the pseudo-inverse of a matrix the size of the full network $|\mathcal{L}|\times |\mathcal{L}|$, is computationally prohibitive, with a complexity of $O(|\mathcal{L}|^3)$. We propose instead to exploit the inherent sparsity of the problem to find a much more efficient approach. 

For any given traveler type $w$, the optimal flow vector $x^{w*}$ is typically sparse,  as many links on irrelevant parts of the network have zero flow. 
The corresponding entries in the Jacobian matrix for these inactive links are simply zero. This allows us to restrict the expensive matrix operations to the much smaller active subnetwork where flows are strictly positive, $x_{ij}^{w*} > 0$.
 
The streamlined computational procedure is as follows. For each type $w \in \mathcal{W}$, we first identify the active subnetwork, which consists only of the set of active links $\mathcal{L}^{w*}$ and their incident nodes. We then construct a node-link incidence matrix for this active subnetwork and remove the row corresponding to the destination of $w$, denoting the resulting matrix as $\tilde{A}^{w*}$. This matrix has full row rank~\citep[see e.g., Prop. 4.3 in][]{biggs1993algebraic},  which makes the term $\left(\tilde{A}^{w*}\right)^\top \tilde{A}^{w*}$ invertible.
We use $\sim$ to denote quantities related to the active subnetwork. The key algorithmic steps are then summarized in Algorithm~\ref{algo}.

\begin{algorithm}[H] \label{algo}
\DontPrintSemicolon
\KwIn{PURC optimal link flows $(x^{w*})_{w\in\mathcal{W}}$}

\For{$w \in\mathcal{W}$}{
    \textbf{1. Construct the projection matrix $\tilde{P}^{w*}$}
    
    The projection matrix for the active subnetwork is
    \begin{equation}\label{eq:sparse_proj}
        \tilde{P}^{w*} = \mathbb{I} - \left(\tilde{A}^{w*}\right)^\top \left[\left(\tilde{A}^{w*}\right)^\top \tilde{A}^{w*}\right]^{-1} \tilde{A}^{w*}
    \end{equation}

    \textbf{2. Construct the Jacobian of the PURC optimal link flow}
    \begin{equation}
        \nabla_{\tilde{c}} \tilde{x}^{w*}(c) = \left[\tilde{P}^{w*} \nabla_{\tilde{x}}^2 F(x^{w*}) \tilde{P}^{w*} \right]^+,
    \end{equation}
    where $\nabla_{\tilde{x}}^2 F(x^{w*}) = \text{diag}(\tilde{h}^{w*})$ with $\tilde{h}^{w*} = (\nabla^2 F_{ij}(x_{ij}^{w*}))_{ij \in \mathcal{L}^{w*}}$. Then, the full Jacobian matrix is:
    \begin{equation}
        \nabla {x}^{w*}(c) = \begin{pmatrix}
\nabla_{\tilde{c}} \tilde{x}^{w*}(c) & \mathbf{0} \\ 
\mathbf{0} & \mathbf{0}%
\end{pmatrix},
    \end{equation}
    up to re-arrangement for indexing consistency.
}

\caption{Algorithm for computing PURC sensitivity}
\end{algorithm}
This approach has significant computational benefits. It dramatically reduces the computational complexity from $O(|\mathcal{L}|^3)$ to $O(|\tilde{\mathcal{L}}^{w*}|^3)$, where $|\tilde{\mathcal{L}}^{w*}|$ is typically far smaller than $|\mathcal{L}|$. This makes the sensitivity analysis scalable to very large, real-world networks, as will be shown in our large-scale experiment below. 

The computations in Algorithm \ref{algo} are carried out independently for each travel type $w \in \mathcal{W}$. This makes the algorithm  embarrassingly parallelizable by traveler type, which allows for substantial speedups with multi-core processing units. 

The sensitivity of the flow for each traveler type is an input for the computation of the sensitivity of traffic equilibrium, discussed in the next section.

\subsubsection{Computation of sensitivity for traffic equilibrium}
 
The sensitivity analysis of traffic equilibrium requires evaluating the Jacobian from Eq.~\eqref{eq:jac_eq_cost}, which involves inverting a large, network-sized matrix $\nabla_{c} \zeta^{-1}(c^*; \theta) -  \nabla_c x^{*}(c^*; q)$, where  the aggregate Jacobian $\nabla_c x^{*}(c^*; q)$ is assembled from the per-type Jacobians computed by Algorithm \ref{algo}, evaluated at the equilibrium costs $c^*
$. Direct inversion of this matrix presents a significant computational bottleneck for large-scale networks.
To overcome this, we propose an efficient two-step approach that exploits the inherent structure of the equilibrium problem. The core idea is to first reduce the problem's dimensionality by leveraging sparsity and then to solve the reduced problems using a numerically superior method.

First, we exploit again the fact that, in equilibrium, many links in the network are inactive, meaning they carry zero aggregate flows $x^*_{ij} = 0$. By partitioning the links into active $\tilde{\mathcal{L}^*} = \{ij| x^*_{ij} = \sum_{w\in\mathcal{W}}q^w x_{ij}^{w*}>0\}$ and inactive sets, we can reorder the matrix into a simpler block-diagonal structure. This isolates the computationally intensive part of the problem into two smaller sub-matrices: 
\begin{equation}
    \nabla_{c} \zeta^{-1}(c^*; \theta) -  \nabla_c x^{*}(c^*; q) = \begin{pmatrix}
D_1 - \nabla_{\tilde{c}} \tilde{x}^{*}(c)  & \mathbf{0} \\ 
\mathbf{0} & D_2%
\end{pmatrix},
\end{equation}
where the upper-left block corresponds to the active subnetwork.  $D_1$ is a diagonal matrix containing the inverse link cost functions, $\zeta_{ij}^{-1}(c^*_{ij}; \theta)$, for the active links $ij \in \tilde{\mathcal{L}^*}$. Similarly, $\nabla_{\tilde{c}} \tilde{x}^{*}(c)$ is the Jacobian of the active aggregated PURC flows with respect to the costs on the active links, that is $\nabla_{\tilde{c}} \tilde{x}^{*}(c) = \left(\sum_{w\in\mathcal{W}} q^w \nabla^\top_{\tilde{c}}{x}_{ij}^{w*}(c)\right)^\top_{ij\in\tilde{\mathcal{L}}^*}$ with $\tilde{c} = (c_{ij})_{ij \in \tilde{\mathcal{L}}^*}$. The bottom-right block corresponds to the inactive links. Since the Jacobian of PURC flows is zero for inactive links, this block simplifies to $D_2$, a diagonal matrix of the inverse link cost derivatives for inactive links, $ij \in \mathcal{L} \setminus \tilde{\mathcal{L}}^*$.
Consequently, the inverse of the diagonal block $D_2$ is trivial. The only remaining challenge is to handle the smaller, dense block corresponding to the active subnetwork. 

Second, we consider the inversion of the matrix $M \coloneqq D_1 - \nabla_{\tilde{c}} \tilde{x}^{*}(c)$ for the  active subnetwork block. Computing the inverse $M^{-1}$ first is not the most efficient or stable method.  
Instead, we may observe that our goal is to compute $J \coloneqq M^{-1}\nabla_{\tilde{y}} \tilde{G}(y, c^*)$. We propose to employ an efficient and numerically stable computational routine~\citep[see e.g., Ch. 3 in][]{golub2013matrix} for that purpose: finding $J$ by solving the equivalent linear system of equations $MJ=\nabla_{\tilde{y}} \tilde{G}(y, c^*)$.  This method allows leveraging highly optimized and numerically stable algorithms to find $J$ without ever needing to form the explicit inverse. For medium-size matrices (e.g., up to $10,000\times10,000$ on modern machines), direct solution routines like Cholesky or LDLT factorization \citep{golub2013matrix} are highly efficient with time complexity of ${|\tilde{\mathcal{L}}^*|}^3/3 + O(|\tilde{\mathcal{L}}^*|^2)$, where $|\tilde{\mathcal{L}}^*|$ is the number of active links. For extremely large matrices, Krylov subspace methods like the conjugate gradient method \citep{golub2013matrix} are well suited, whose complexity for large linear systems can be magnitudes smaller than direct methods~\citep[see e.g., Thm 6.29 in][]{saad2003iterative}.   

Combining these two steps, the block-diagonal transformation to reduce the problem's size and the solution of a linear system as the core computational step, we obtain an algorithm that is scalable, numerically stable, and practical for implementation in large, real-world networks.

The computational approach exploits sparsity at two distinct levels. At the first level, Algorithm \ref{algo} computes the per-type Jacobians $\nabla_c x^{w*}(c^*)$ on each type's active subnetwork $\tilde{\mathcal{L}}^{w*}$, which is typically very small relative to the full network. This is where the most substantial computational savings occur. At the second level, the block-diagonal reduction in the equilibrium sensitivity computation exploits sparsity in the aggregate active subnetwork $\tilde{\mathcal{L}}^* = \{ij : \sum_w q^w x_{ij}^{w*} > 0\}$. When the number of OD pairs is large and traffic analysis zones are fine, most links may carry positive aggregate flow, limiting the compression at this second level. In such cases, the primary efficiency gain would come from Algorithm \ref{algo} together with the fact that the equilibrium sensitivity is obtained by solving a linear system rather than recomputing the full equilibrium.

\section{Numerical examples}\label{sec:numerical}

In this section, we show several applications of the sensitivity analysis of traffic equilibrium using an example network. Specifically, we will show how the sensitivity information can be used for estimating optimal solutions subject to a shift in the design parameters, something which is often required for bilevel optimization problems. We will show how sensitivity information can assist in identifying important design parameters, for which small changes will lead to large changes in network performance. We will apply the sensitivity-based uncertainty analysis to quantify the impacts of model parameter uncertainty on uncertainty in performance predictions. Lastly, we use sensitivity analysis of PURC to illustrate that the PURC model can capture complementarity between paths. The latter is a counterexample to  Proposition 3 in \citet{Fosgerau2021}, and we point out the error in their proof.

\begin{figure}[H]
    \centering
\includegraphics[width=0.5\linewidth]{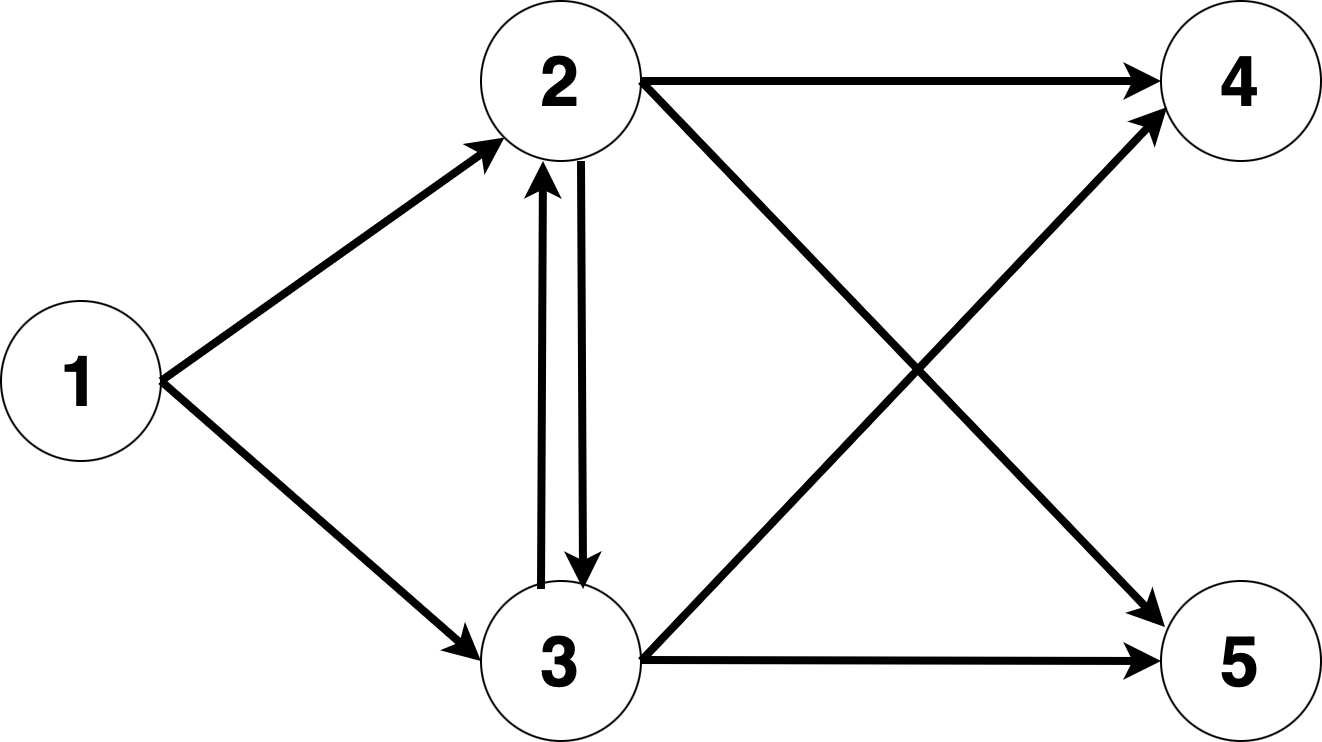}
    \caption{Example network}\label{fig:example_network}
\end{figure}

We consider the example network in Figure~\ref{fig:example_network}, which has  5 nodes and 8 directional links, as well as two origin-destination (OD) pairs $(1, 4)$, $(1, 5)$ with demands of 15 and 20, respectively. We assume the link costs to follow BPR link cost-time functions 
\begin{equation} \label{bpr}
    \zeta_{ij}(x_{ij}) = t_{0,ij}\left(1+\alpha\left(\frac{x_{ij}}{\kappa_{ij}}\right)^\beta\right),
\end{equation}
where $t_{0,ij}$ are free-flow travel times, $\alpha=0.15$, $\beta=4$, and $\kappa_{ij}$ are link capacities, with numerical values specified in Table~\ref{tab:brp_param}. Unless otherwise specified, the perturbation function in this section is the entropy-based Eq.~\eqref{eq:Fentropic}.
\begin{table}[H]
\captionsetup{justification=centering}
\caption{Link performance parameters.}\label{tab:brp_param}
\setlength\tabcolsep{8pt}
\centering
\begin{tabular}{ccccccccc}
                         & $(1, 2)$  & $(1, 3)$  & $(2, 3)$ &  $(2, 4)$ & $(2, 5)$ & $(3, 2)$ & $(3, 4)$ & $(3, 5)$\\
                        \toprule
$t_{0,ij}$              & 3.0 & 5.0 & 1.0 & 5.0 & 6.0 & 1.0 & 4.0 & 4.0\\
\midrule
$\kappa_{ij}$       & 30.0 & 30.0 & 15.0 & 15.0 & 15.0 & 15.0 & 15.0 & 15.0\\
                        \bottomrule
\end{tabular}
\end{table}
Table~\ref{tab:exact_estimate_opt} reports the equilibrium link flows at these parameters. The equilibrium is computed using the algorithm presented in \cite{yao_perturbed_2024}.

\subsection{Sensitivity analysis of equilibrium}

Table~\ref{tab:jac_x_kappa} shows the sensitivity of equilibrium link flows $x^*$ with respect to link capacities $\kappa$ and Table~\ref{tab:jac_x_t0} shows the same with respect to free-flow travel times $t_{0}$. Both tables are calculated using Theorem \ref{thm:jacobian_equilibrium_cost}. We first note that the signs in the Jacobian are immediately plausible given the network structure. We also observe as a check that, for any $\kappa_{ij}$, $\forall ij \in \mathcal{L}$ (and similarly for $t_{0,ij}$), $\partial x^*_{25}(\kappa)/\partial \kappa_{ij} + \partial x^*_{35}(\kappa)/\partial \kappa_{ij} = \partial x^*_{24}(\kappa)/\partial \kappa_{ij} + \partial x^*_{34}(\kappa)/\partial \kappa_{ij} = 0$, which implies, as expected, that flow conservation holds. 
\begin{table}[H]
\captionsetup{justification=centering}
\caption{Jacobian of equilibrium link flow with respect to link capacity.}\label{tab:jac_x_kappa}
\setlength\tabcolsep{8pt}
\centering
\begin{tabular}{crrrrrrrr}
 $\nabla x^*(\kappa)$ & $\kappa_{12}$ &  $\kappa_{13}$      & $\kappa_{23}$      & $\kappa_{24}$ &     $\kappa_{25}$      & $\kappa_{32}$     & $\kappa_{34}$      & $\kappa_{35}$      \\
\toprule
$x^*_{12}$                         & 0.356                 & -0.004 & 0.083  & 0.046  & 0.010  & 0.000 & -0.006 & -0.128 \\
$x^*_{13}$                         & -0.356                & 0.004  & -0.083 & -0.046 & -0.010 & 0.000 & 0.006  & 0.128  \\
$x^*_{2 3}$                         & 0.245                 & -0.003 & 0.164  & -0.096 & -0.022 & 0.000 & 0.012  & 0.286  \\
$x^*_{24}$                         & 0.064                 & -0.001 & -0.045 & 0.150  & -0.002 & 0.000 & -0.018 & 0.030  \\
$x^*_{25}$                         & 0.048                 & -0.001 & -0.036 & -0.008 & 0.034  & 0.000 & 0.001  & -0.444 \\
$x^*_{32}$                         & 0.000                 & 0.000  & 0.000  & 0.000  & 0.000  & 0.000 & 0.000  & 0.000  \\
$x^*_{3 4}$                         & -0.064                & 0.001  & 0.045  & -0.150 & 0.002  & 0.000 & 0.018  & -0.030 \\
$x^*_{35}$                         & -0.048                & 0.001  & 0.036  & 0.008  & -0.034 & 0.000 & -0.001 & 0.444 \\
\bottomrule
\end{tabular}
\end{table}

\begin{table}[H]
\captionsetup{justification=centering}
\caption{Jacobian of equilibrium link flow with respect to free-flow travel time.}\label{tab:jac_x_t0}
\setlength\tabcolsep{8pt}
\centering
\begin{tabular}{crrrrrrrr}
 $\nabla x^*(t_0)$ & $t_{0,12}$ &  $t_{0,13}$      & $t_{0,23}$      & $t_{0,24}$ &     $t_{0,25}$      & $t_{0,32}$     & $t_{0,34}$      & $t_{0,35}$      \\
\toprule
$x^*_{12}$                         & -9.775                 & 8.891 & -6.405  & -1.626  & -1.204  & 0.000 & 1.597 & 1.317 \\
$x^*_{13}$                         & 9.775                & -8.891  & 6.405 & 1.626 & 1.204 & 0.000 & -1.597  & -1.317  \\
$x^*_{2 3}$                         & -6.706                 & 6.099 & -12.731 & 3.406 & 2.700 & 0.000 & -3.345  & -2.954  \\
$x^*_{24}$                         & -1.751                 & 1.593 & 3.503 & -5.319  & 0.283 & 0.000 & 5.224 & -0.309  \\
$x^*_{25}$                         & -1.318                 & 1.198 & 2.823 & 0.287 & -4.186  & 0.000 & -0.282  & 4.581 \\
$x^*_{32}$                         & 0.000                 & 0.000  & 0.000  & 0.000  & 0.000  & 0.000 & 0.000  & 0.000  \\
$x^*_{3 4}$                         & 1.751                & -1.593  & -3.503  & 5.319 & -0.283  & 0.000 & -5.224  & 0.309 \\
$x^*_{35}$                         & 1.318                & -1.198  & -2.823  & -0.287  & 4.186 & 0.000 & 0.282 & -4.581  \\
\bottomrule
\end{tabular}
\end{table}


\subsection{Solution estimation}
A classical application of sensitivity analysis is to estimate the new optimal solution after a shift in the design parameters. This is useful since it can be much faster to compute the approximate new solution than to solve the model exactly. Such computation efficiency is critical for bilevel optimization problems~\citep[e.g.,][]{yang1997traffic,josefsson2007sensitivity,liu2022inducing}.

For illustration, we consider two scenarios: i) increase the capacity on link $(1, 2)$ by $5\%$; ii) increase the free-flow travel time on link $(1, 2)$ by $5\%$. Our goal is to compare the optimal link flows computed exactly, rerunning the traffic equilibrium problem, and approximated using the first-order Taylor series approximation
\begin{equation}\label{taylor}
    x^{*}(\theta +\epsilon_{\theta}) \approx x^{*}(\theta) +  \nabla x^*(\theta)\epsilon_{\theta},
\end{equation}
where $x^{*}(\theta +\epsilon_{\theta})$ denotes the optimal link flows subject to a shift $\epsilon_{\theta}$ in parameters $\theta$, and $\nabla x^*(\theta)$ is computed using Eq.~\eqref{eq:total_sensitivity}. The results in Table~\ref{tab:exact_estimate_opt} show that the approximate solutions are close to the exact solutions after a 5\% shift in the two model parameters.

\begin{table}[H]
\captionsetup{justification=centering}
\caption{Estimated and exact solutions under shifts $\epsilon_{\theta}$.}\label{tab:exact_estimate_opt}
\setlength\tabcolsep{6pt}
\centering
\begin{tabular}{c|r|rrr|rrr}
\multicolumn{1}{c}{\multirow{2}{*}{\textbf{Link}}} & \multicolumn{1}{c}{\multirow{2}{*}{\textbf{Baseline}}} & \multicolumn{3}{c}{\textbf{Increase $\kappa_{(1,2)}$ by 5\%}} & \multicolumn{3}{c}{\textbf{Increase $t_{0,(1,2)}$ by 5\%}} \\
\multicolumn{1}{c}{}                       & \multicolumn{1}{c}{}                             & Exact           & Approximate         & Difference         & Exact             & Approximate            & Difference           \\
\midrule
$(1, 2)$                                          & 27.127                                           & 27.631          & 27.662            & -0.031             & 25.633            & 25.661               & -0.028               \\
$(1, 3)$                                          & 7.873                                            & 7.370           & 7.339             & 0.031              & 9.367             & 9.339                & 0.027                \\
$(2, 3)$                                          & 11.446                                           & 11.790          & 11.813            & -0.023             & 10.405            & 10.440               & -0.035               \\
$(2, 4)$                                          & 9.233                                            & 9.324           & 9.329             & -0.005             & 8.973             & 8.971                & 0.002                \\
$(2, 5)$                                         & 6.448                                            & 6.517           & 6.520             & -0.003             & 6.255             & 6.250                & 0.005                \\
$(3, 2)$                                          & 0.000                                            & 0.000           & 0.000             & 0.000              & 0.000             & 0.000                & 0.000                \\
$(3, 4)$                                         & 5.767                                            & 5.676           & 5.671             & 0.005              & 6.027             & 6.030                & -0.003               \\
$(3, 5)$                                          & 13.552                                           & 13.483          & 13.480            & 0.003              & 13.744            & 13.750               & -0.006              \\
\bottomrule
\end{tabular}
\end{table}

\subsection{Analysis of the propagation of uncertainty}
When model parameters are estimated with uncertainty, model predictions will also be uncertain.
Sensitivity-based uncertainty analysis concerns how uncertainty in model parameters propagates to model predictions in terms of link flow patterns at equilibrium and to determine the confidence level of the model predictions. We adopt the classic approach~\citep[e.g., applied in][]{yang2013sensitivity, du2022sensitivity} for uncertainty analysis. Specifically, let us suppose now that the model parameters are a random vector $\Theta$ with mean $\mu_{\Theta}$ and covariance matrix $K_{\Theta}$,
the expectation and covariance of the equilibrium link flows $X^*=x^*(\Theta)$ are approximated by:
\begin{align}
    \mathbb{E}[X^*] &= \mathbb{E}[x^*(\Theta)]\approx x^*(\mu_{\Theta}) + \mathbb{E}[(\Theta - \mu_{\Theta})] [\nabla x^*(\mu_{\Theta})]^\top = x^*(\mu_{\Theta}) \label{eq:mean_est}\\
    \text{Var}[X^*] &= \mathbb{E}[(X^* - \mathbb{E}[X^*])(X^* - \mathbb{E}[X^*])^\top] \nonumber \\
    &\approx \mathbb{E}[(x^*(\Theta) - x^*(\mu_{\Theta}))(x^*(\Theta) - x^*(\mu_{\Theta}))^\top] \nonumber \\
    &\approx \mathbb{E}[((\Theta - \mu_{\Theta})[\nabla x^*(\mu_{\Theta})]^\top )([\nabla x^*(\mu_{\Theta})])(\Theta - \mu_{\Theta})^\top)] \nonumber \\
    & = \nabla x^*(\mu_{\Theta}) K_\Theta [\nabla x^*(\mu_{\Theta})]^\top. \label{eq:uncertainty_cov}
\end{align}
We can then derive the covariance and correlation between equilibrium link flow $X^*$ and model parameters $\Theta$ by the multivariate delta method:
\begin{align}
 \text{Cov}[X^*, \Theta] &\approx \nabla x^*(\mu_{\Theta}) K_\Theta \\
 \text{Corr}[X^*, \Theta] &= [\text{diag(}\sigma_{X^*})]^{-1} \text{Cov}[X^*, \Theta] [\text{diag(}\sigma_{\Theta})]^{-1}\label{eq:corr_est},
\end{align}
where $\text{diag}(\sigma_{X^*}), \text{diag(}\sigma_{\Theta})$ are diagonal matrix whose diagonal elements are standard deviations of $X^*$ and $\Theta$, respectively. This covariance and correlation information will provide insights into the importance of model parameters on model predictions, such as for selecting critical parameters.

For demonstration, we consider a similar experiment as in~\cite{yang2013sensitivity}. We assume that link capacities are distributed with mean $\mu_{\kappa}$ as specified in Table~\ref{tab:brp_param} and coefficient of variation (CV) of 0.20. Table~\ref{tab:uncertainty} reports i) the mean equilibrium link flows,  computed using Eq.~\eqref{eq:mean_est}, ii) the standard deviation, which is the square root of the diagonal entries of Eq.~\eqref{eq:uncertainty_cov}, and iii) the coefficient of variation, computed using the estimated mean and standard deviation. 

\begin{table}[H]
\captionsetup{justification=centering}
\caption{Uncertainty of equilibrium link flows due to uncertainty in link capacities.}\label{tab:uncertainty}
\setlength\tabcolsep{8pt}
\centering
\begin{tabular}{crrr}
\multicolumn{1}{c}{{\textbf{Links}}} & \multicolumn{1}{c}{{\textbf{Mean}}} & {{\textbf{Std. Dev.}}} & {{\textbf{CV}}} \\
\midrule
$(1, 2)$                                          & 27.127                                           & 2.191          & 0.079                        \\
$(1, 3)$                                          & 7.873                                            & 2.191           & \textbf{0.299}             \\
$(2, 3)$                                          & 11.446                                           & 1.795          & 0.152                \\
$(2, 4)$                                          & 9.233                                            & 0.614           & 0.066                         \\
$(2, 5)$                                         & 6.448                                            & 1.371           & \textbf{0.210}                         \\
$(3, 2)$                                          & 0.000                                            & 0.000           & -                   \\
$(3, 4)$                                         & 5.767                                            & 0.614           & 0.108                      \\
$(3, 5)$                                          & 13.552                                           & 1.371         & 0.101                  \\
\bottomrule
\end{tabular}
\end{table}

We observe in Table~\ref{tab:uncertainty} that the CV of most links is smaller than that of the link capacities (0.20), while uncertainties on links $(1, 3), (2, 5)$ are relatively larger, potentially due to their larger free-flow travel times that scales up the variations. 

\begin{table}[H]
\captionsetup{justification=centering}
\caption{Correlation between equilibrium link flow and link capacity.}\label{tab:corr}
\setlength\tabcolsep{8pt}
\centering
\begin{tabular}{crrrrrrrr}
 $\text{Corr}[X^*, \kappa]$ & $\kappa_{12}$ &  $\kappa_{13}$      & $\kappa_{23}$      & $\kappa_{24}$ &     $\kappa_{25}$      & $\kappa_{32}$     & $\kappa_{34}$      & $\kappa_{35}$      \\
\toprule
$x^*_{12}$                         & 0.976                 & -0.012 & 0.113  & 0.063  & 0.013  & 0.000 & -0.008 & -0.175 \\
$x^*_{13}$                         & -0.976                & 0.012  & -0.113 & -0.063 & -0.013 & 0.000 & 0.008  & 0.175  \\
$x^*_{2 3}$                         & 0.817                 & -0.010 & 0.275  & -0.160 & -0.037 & 0.000 & 0.019  & 0.479  \\
$x^*_{24}$                         & 0.624                 & -0.007 & -0.221 & 0.730  & -0.011 & 0.000 & -0.089 & 0.146  \\
$x^*_{25}$                          & 0.210                 & -0.002 & -0.080 & -0.018 & 0.075  & 0.000 & 0.002  & -0.971 \\
$x^*_{32}$                         & 0.000                 & 0.000  & 0.000  & 0.000  & 0.000  & 0.000 & 0.000  & 0.000  \\
$x^*_{3 4}$                         & -0.624                & 0.007  & 0.221  & -0.730 & 0.011  & 0.000 & 0.089  & -0.146 \\
$x^*_{35}$                         & -0.210                & 0.002  & 0.080  & 0.018  & -0.075 & 0.000 & -0.002 & 0.971  \\
\bottomrule
\end{tabular}
\end{table}

In addition, we can estimate the correlation between equilibrium link flows and link capacities using Eq.~\eqref{eq:corr_est}. As shown in Table~\ref{tab:corr}, link capacities are mostly correlated with the corresponding equilibrium link flows, i.e., diagonal values are larger than off-diagonal values in each column. We also observe that the capacity on link $(1, 2)$ has strong correlations with equilibrium link flows, compared to link capacities on other links. This implies that $\kappa_{12}$ could have significant impacts on equilibrium link flow patterns in the network.

\subsection{Substitution patterns }\label{sec:complementarity}

The PURC model does not have error terms in the way that a classical discrete choice model in its random utility representation has random utility components. Therefore, talking about correlation in the PURC model is not immediately meaningful. Instead, we discuss the model in terms of substitution and complementarity, which is what we are concerned about. 
Specifically, we say that substitution occurs when a cost increase for an alternative causes the demand to increase for another alternative. In contrast, complementarity occurs when a cost increase for an alternative causes the demand to decrease for some other alternative. As we will show in this section, the PURC model allows paths to be complements. This is in contrast to the ARUM discrete choice model, in which alternatives can only be substitutes \citep{Fosgerau2013y}. 

It is clear that both substitutability and complementarity are at work when considering route choice at the level of network links: a cost increase on a link tends to reduce flow on nearby upstream and downstream links, while flow tends to increase on links on alternative paths. The situation is different when we consider substitution over paths. The path alternatives are necessarily substitutes in a classical discrete choice model, an additive random utility discrete choice model over paths. This fact follows from the properties of general additive random utility models \citep[e.g.,][]{Fosgerau2013y}. In this paper, by applying our sensitivity analysis results, we will show via an example that complementarity among paths is possible in the PURC model in general.

To illustrate the substitution pattern of PURC, we consider a simple example network (Figure \ref{fig:1_complementarity_example}) with equal link lengths of 1,  (flow-independent) link costs equal to link lengths, and a single OD pair $(1, 3)$. Suppose a positive flow vector $x^*$ such that all links are active, and that the perturbation function $F(x)$ takes the form of Eq.~\eqref{eq:Fquad}.

Applying Theorem \ref{thm:jacobian}, the Jacobian of $x^*$ with respect to the cost vector can be found to be the following matrix. 

\[
 \nabla_c x^* = \begin{bmatrix}

 &(12)&(15)&(23)&(24)&(43)&(54)&(53) \\
(12)&-0.333&0.333&-0.167&-0.167&0&0.167 &	0.167 \\
(15)&0.333 &-0.3330& 0.167&0.167&0&-0.167&-0.167\\
(23)&-0.167&0.167&-0.458&0.292&0.25&\mathbf{\tr -0.0417}&0.208\\
(24)&-0.167&0.167&0.292&-0.458&-0.25&0.208&-0.0417\\
(43)&0&0&0.25&-0.25&-0.5&-0.25&0.25\\
(54)&0.167&-0.167&\mathbf{\tr -0.0417}&0.208&-0.25&-0.458&0.292\\
(53)&0.167&-0.167&0.208&-0.0417&0.25&0.292&-0.458
\end{bmatrix}
\]

By construction, this example network contains 4 paths, where path flows are equal to the flows on links $(23)$, $(24)$, $(54)$, and $(53)$, respectively.
Consider the entries marked in bold, corresponding to $\frac{\partial x_{54}}{\partial c_{23}} $ and by symmetry to $\frac{\partial x_{23}}{\partial c_{54}}$. This is negative, which shows that paths $1\rightarrow{}2\rightarrow{}3$ and $1\rightarrow{}5\rightarrow{}4\rightarrow{}3$ are complements and not substitutes: the flow on the latter route decreases when the cost on link $23$ on the first route increases.

We take this opportunity to point out that this contradicts Proposition 3 in \citet{fosgerau_perturbed_2022}, which erroneously stated that paths are necessarily substitutes in PURC.
In the proof of that proposition, we rewrote the perturbation function $F$ defined in terms of link flows as a function $G$ defined in terms of path probabilities. We then claimed that $G$ is strictly convex. However, that is not generally true, since the decomposition of a flow into path flows is not unique.

\begin{figure}[H]
  \centering
\begin{tikzpicture}[->,shorten >=1pt,auto,node distance=2cm,thick]
\node[draw=black,circle,label=above:{$\text{Origin}$}] (1) at (0,0) {$1$};
  \node[draw=black,circle] (2) at (-2,-2) {$2$};
  \node[draw=black,circle, label=below:{$\text{Destination}$}] (3) at (0,-5) {$3$};
  \node[draw=black,circle] (4) at (0,-3) {$4$};
  \node[draw=black,circle] (5) at (2,-2) {$5$};
  
  \path[every node/.style={font=\sffamily\small}]
    (1) edge node[above] {1} (2)
    (2) edge node[below] {1} (3)
    (2) edge node[above] {1} (4)
    (4) edge node[right] {1} (3)
    (1) edge node[above] {1} (5)
    (5) edge node[above] {1} (4)
    (5) edge node[below] {1} (3);
\end{tikzpicture}
\caption{Simple network example, substitution pattern of PURC.}\label{fig:1_complementarity_example}
\end{figure}

\section{Large-scale demonstration}\label{sec:large_scale}

In this section, we demonstrate the application of flow-dependent sensitivity analysis of the PURC model in a large-scale case, covering the Copenhagen Metropolitan Area using a network containing 30,773 links and 12,876 nodes. Each link $a$ has a BPR cost-time function ~\eqref{bpr}, with specific parameters. In \cite{fosgerau_perturbed_2022}, we utilized a GPS dataset with 1,234,289 observed route choices to estimate the PURC model. We apply their Model C, which includes pace parameters (min/km) specific to link-types, a parameter for a dummy if links lead into intersections (i.e. the number of outlinks excluding u-turns is at least two), and uses the entropy link perturbation function from Eq.~\eqref{eq:Fentropic}. The parameter estimates are shown in Table \ref{tab:copenhagen}.

\begin{table}[H]
\captionsetup{justification=centering}
\caption{Estimated cost parameters for the Copenhagen area~\citep{fosgerau_perturbed_2022}.}\label{tab:copenhagen}
\setlength\tabcolsep{5pt}
\centering
\begin{tabular}{l|llllllll}
   & \multicolumn{8}{c}{\textbf{Pace} {[}min/km{]}}                                                                                                                                                                                                                                                                                                                                                                                                            \\
 \textbf{Dummy}                           & Motorways & \begin{tabular}[c]{@{}l@{}}Motorway \\ ramps\end{tabular} & \begin{tabular}[c]{@{}l@{}}Motor \\ traffic\\  roads\end{tabular} & \begin{tabular}[c]{@{}l@{}}Other \\ national \\ roads\end{tabular} & \begin{tabular}[c]{@{}l@{}}Urban \\ roads\end{tabular} & \begin{tabular}[c]{@{}l@{}}Rural \\ roads\end{tabular} & \begin{tabular}[c]{@{}l@{}}Smaller \\ roads\end{tabular} & \begin{tabular}[c]{@{}l@{}}Other \\ ramps\end{tabular} \\
                            \midrule
0.0155 & 0.3501   & 0.5542                                                   & 0.5623                                                           & 0.5997                                                            & 0.5692                                               & 0.7778                                               & 0.5300                                                 & 0.4456  \\
\bottomrule
\end{tabular}
\end{table}


We scale the demands of the 8,046 OD-relations used in \cite{fosgerau_perturbed_2022} (stemming from the GPS observations) to the overall demand level in the case study area. Using these demands, the estimated route-choice parameters, and the BPR-based link cost-time functions defined for each network link, we calculate the network equilibrium flows. This is done by the approach introduced in \cite{yao_perturbed_2024}, obtaining the resulting link flow vector $x^*$ for each OD-relation. 
Using this, we compute the Jacobian of the equilibrium link flows as follows. First, the per-OD flow sensitivities with respect to equilibrium costs are computed using~\eqref{eq:eqngradient}. This involves evaluating 8046 matrices of size $30773\times{}30773$. Then, the sensitivities of equilibrium costs with respect to parameters are computed using~\eqref{eq:jac_eq_cost}. Finally, we compute the sensitivity of aggregate equilibrium link flows with respect to parameters using the chain rule \eqref{eq:type_specific_sensitivity}-\eqref{eq:total_sensitivity}.

To illustrate the precision of sensitivity analysis carried out using Theorem \ref{thm:jacobian_equilibrium_cost}, we consider a slight change in the free-flow travel time $t_0$ of a network link, and thus calculate the Jacobian with respect to this parameter. Computing the Jacobian takes 134 seconds, and with that, we can easily approximate the network-wide impacts of any small change in the free-flow travel time on any network link in virtually no additional time. In the following, we consider a scenario in which the free-flow travel time is increased by 10\% in the southbound direction of Langebro, one of the four bridges connecting Zealand (northwest) to the island of Amager (southeast). The location of the bridge can be seen in Figures \ref{fig:flowdiff} and \ref{fig:flowdiffZoom}. The resulting link flows are approximated using \eqref{taylor}, where $\nabla x^*(\theta)$ is computed using Eq.~\eqref{eq:type_specific_sensitivity}-\eqref{eq:total_sensitivity} and $\epsilon_{\theta}$ is $0.1\cdot{}t_0$ on the southbound link on Langebro and 0 elsewhere.

To evaluate the approximation, we also compute the exact equilibrium solution for the scenario. This computation takes 52.6 minutes, which is about 24 times longer than when using the Jacobian from Theorem \ref{thm:jacobian_equilibrium_cost}. Figure \ref{fig:scatter} compares the approximate changes in link flows computed using the Jacobian to the exact changes in link flows computed using the exact equilibrium solution.  We observe that predictions are very close to the 45$^\circ$ line.

Figures \ref{fig:flowdiff} and \ref{fig:flowdiffZoom} compare the absolute and relative changes in link flows using the approximate and exact calculations. As also suggested by Figure \ref{fig:scatter}, the differences are very small. Furthermore, the patterns look plausible: when the cost on the southbound direction of the Langebro bridge increases, flows on this and the main corridors leading to and from it decrease, while flows increase on the main alternatives, including other bridges connecting Zealand to the island of Amager. The largest change in flow is on the southbound direction of Langebro, where flow reduces by 193.65 and 192.05 vehicles in the approximate and exact computations, respectively.

\begin{figure}[H]
\centering

\includegraphics[width=0.49\textwidth]{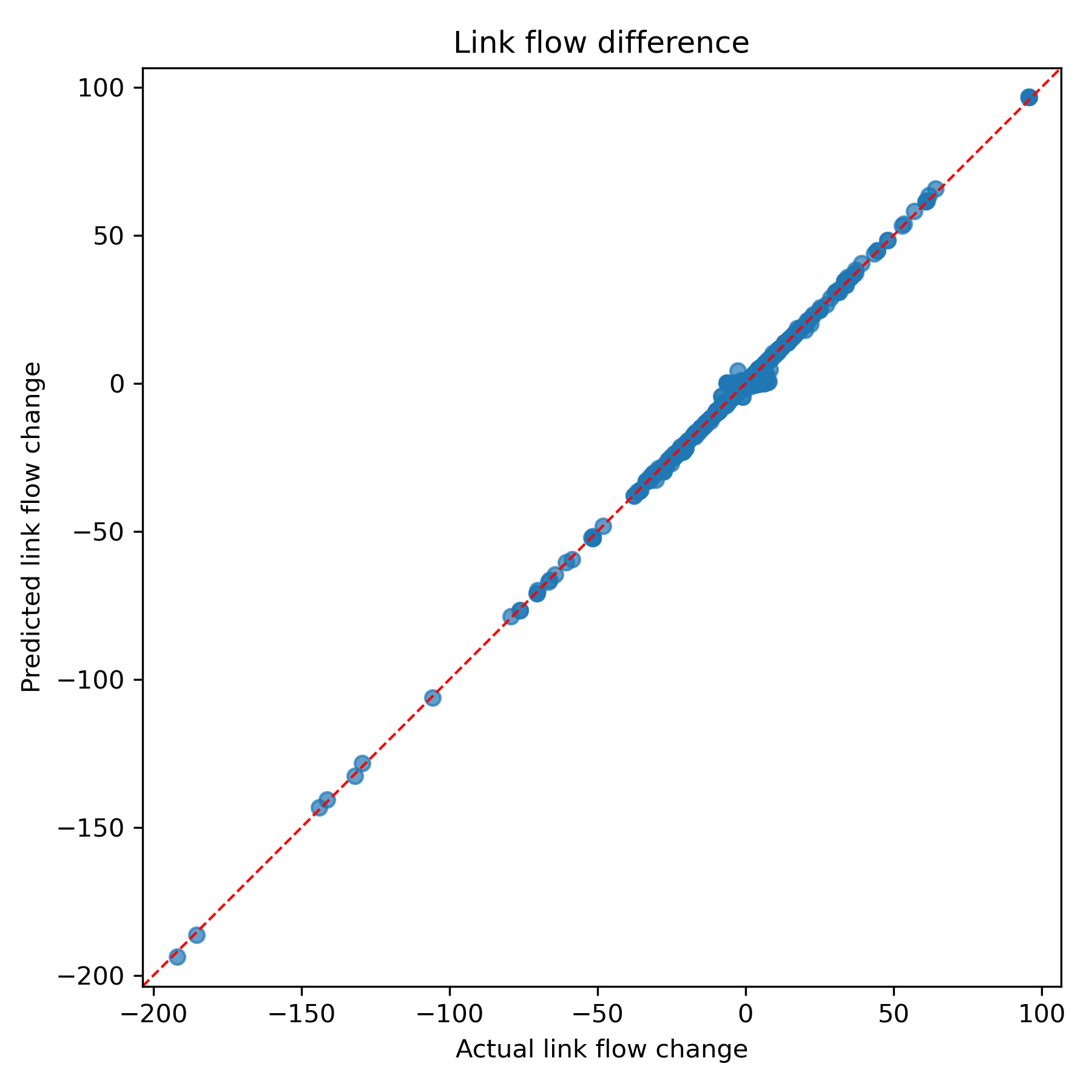} \hfill

\caption{Approximate link flow changes versus exact links flow changes [cars per day]. }
\label{fig:scatter}
\end{figure}

\begin{figure}[H]
\centering
\subfloat[\label{fig:flowdiffPred}]
{\includegraphics[width=0.49\textwidth]{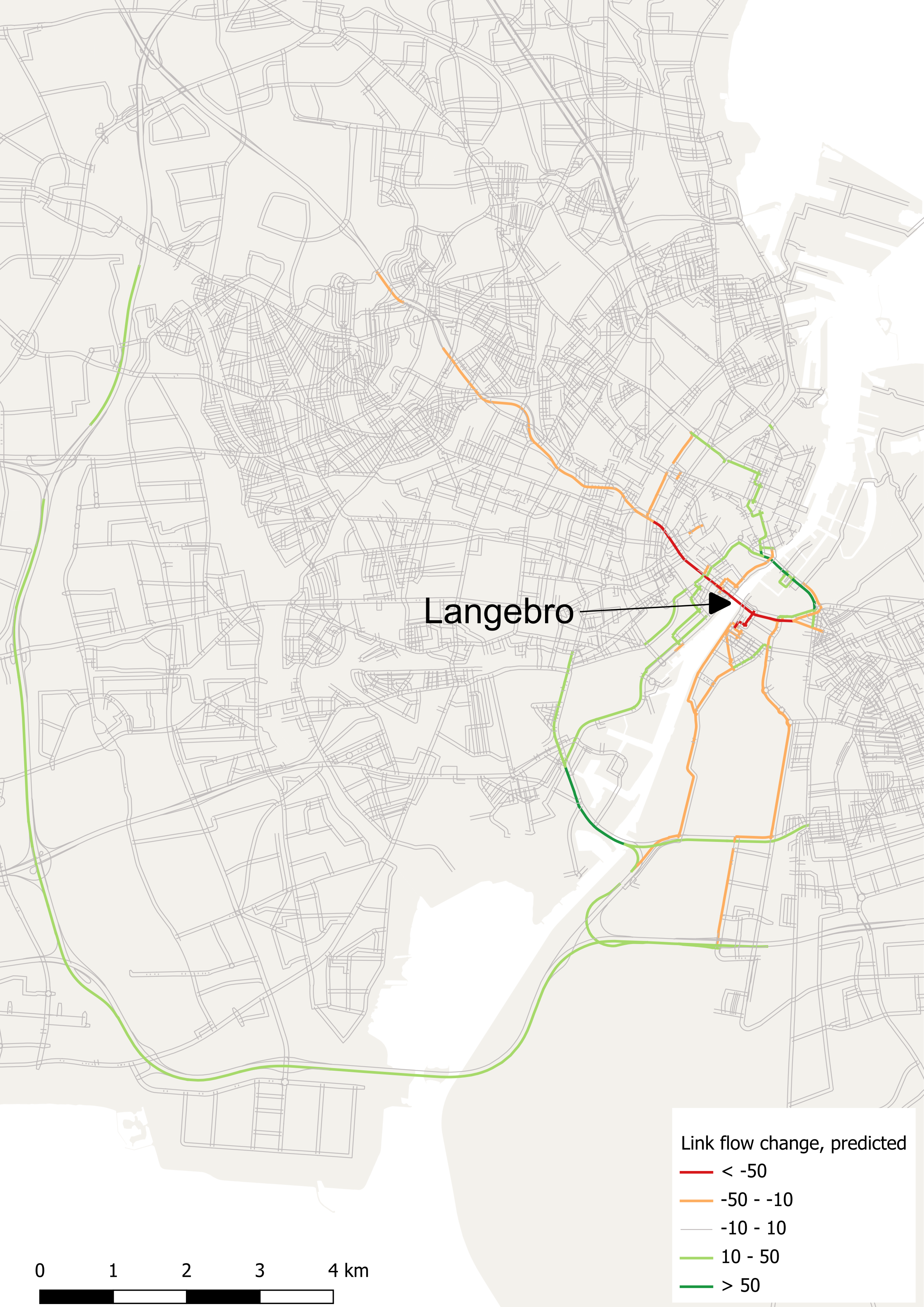}} \hfill
\subfloat[\label{fig:flowdiffActual}]{\includegraphics[width=0.49\textwidth]{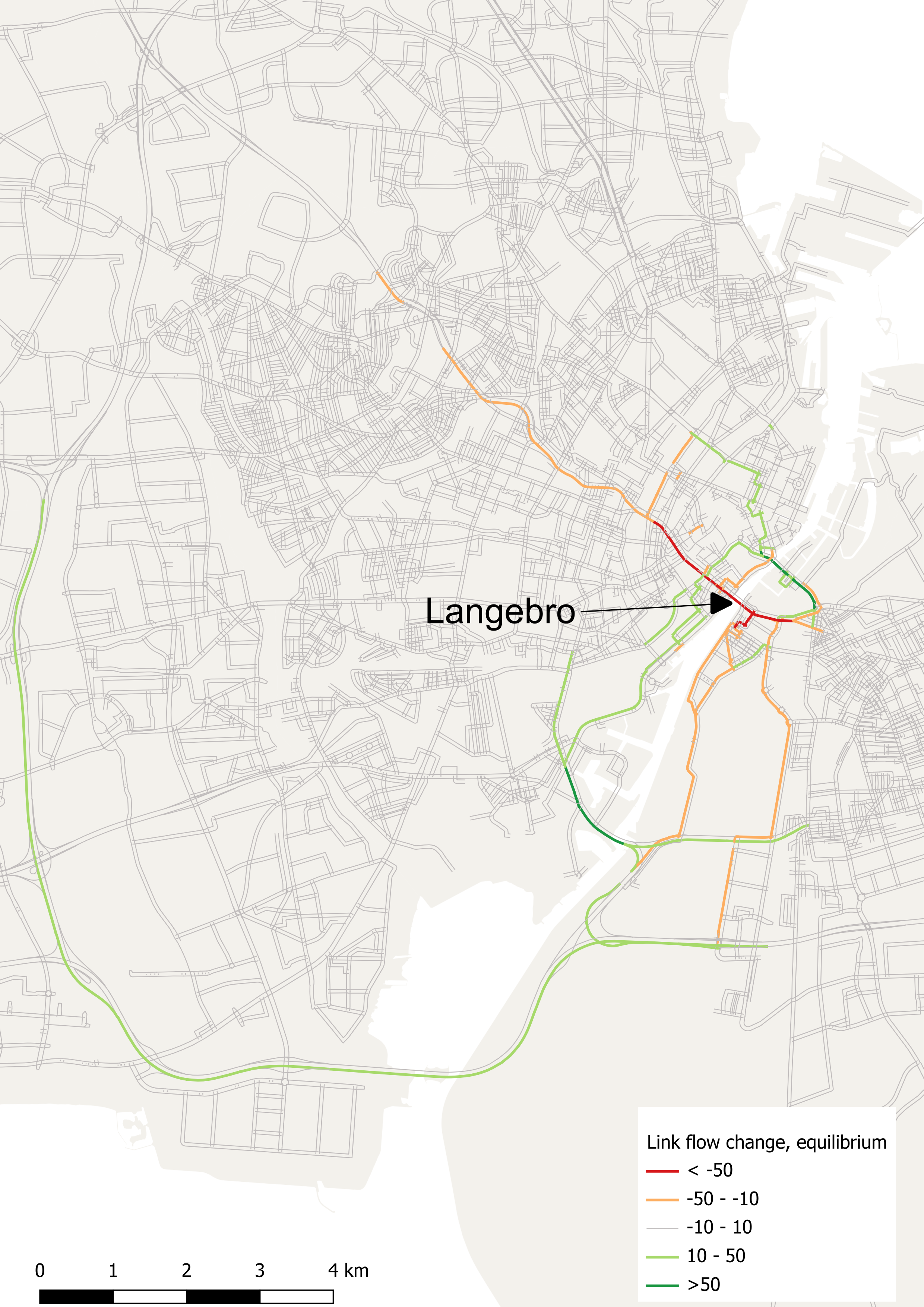}} 
\caption{(a) Approximate flow change, 10\% increase in free-flow travel time on southbound Langebro; (b) Exact flow change, 10\% increase in free-flow travel time on southbound Langebro.}
\label{fig:flowdiff}
\end{figure}

\begin{figure}[H]
\centering
\subfloat[\label{fig:flowdiffPred_2}]
{\includegraphics[width=0.49\textwidth]{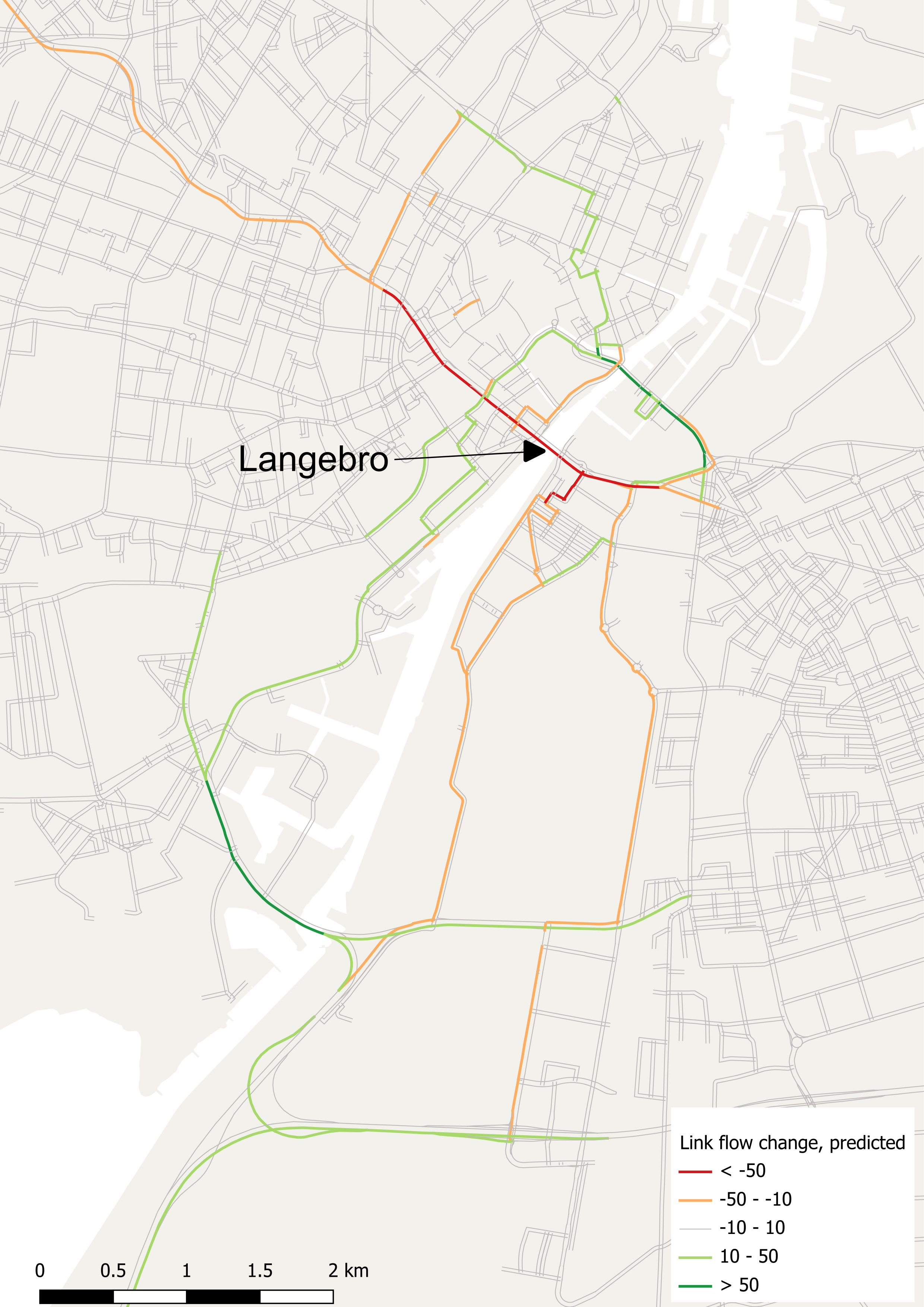}} \hfill
\subfloat[\label{fig:flowdiffActual_2}]{\includegraphics[width=0.49\textwidth]{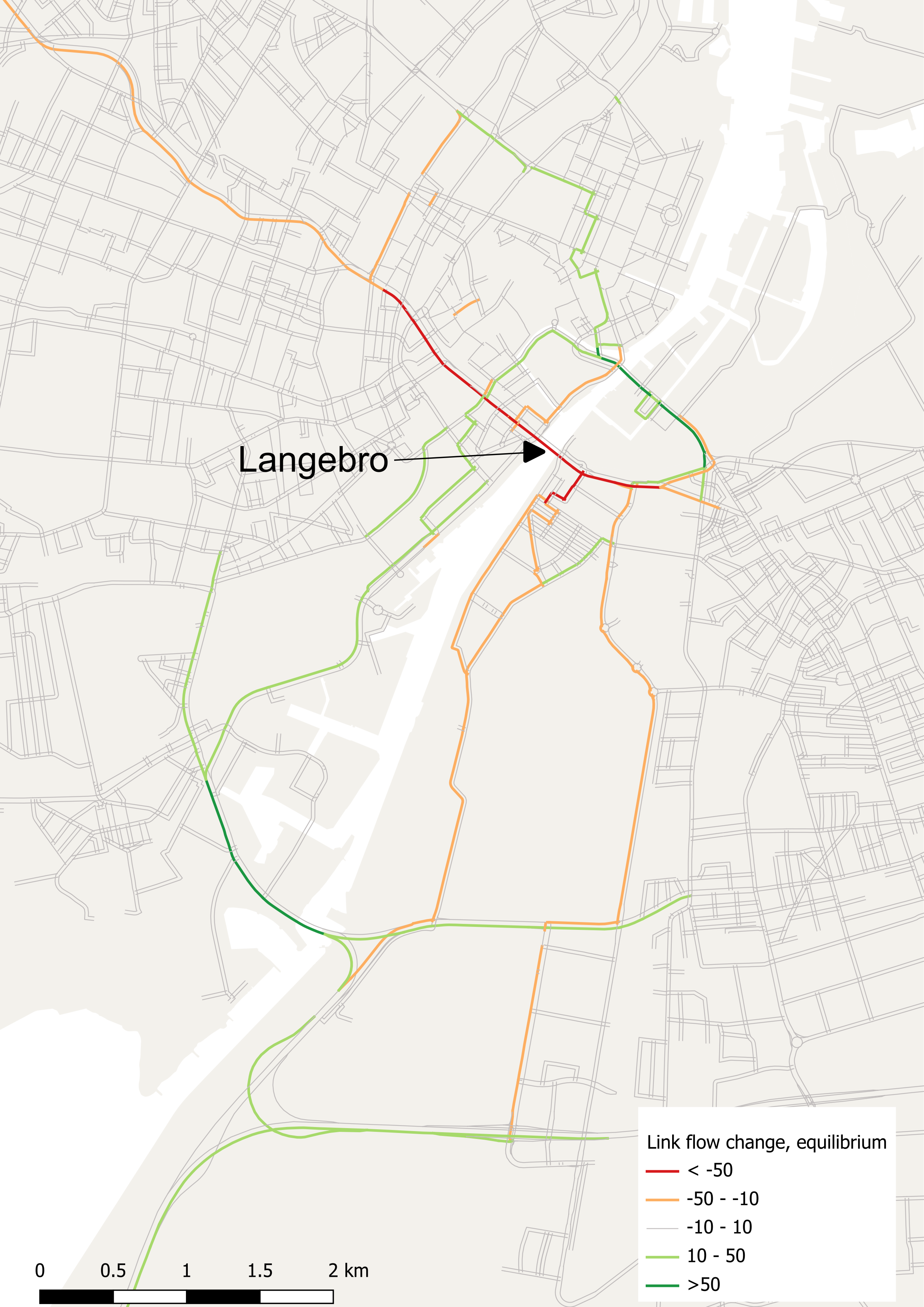}} 
\caption{(a) Approximate flow change, 10\% increase in free-flow travel time on southbound Langebro, zoom; (b) Exact flow change, 10\% increase in free-flow travel time on southbound Langebro, zoom.}
\label{fig:flowdiffZoom}
\end{figure}

In Table \ref{tab:WelfareMeasures}, we compute the welfare measure \eqref{eq:W} using both the approximation and the exact equilibrium solutions, as well as for the baseline. The welfare measure for the baseline is computed as,
\begin{equation}
    W = -\sum_w q^w \left[\left({x^{w*}}\right)^\top c^* + F(x^{w*})\right],
\end{equation}
and so is the scenario computation using the exact equilibrium solution. 
The approximate welfare measure for the scenario  is computed using equation \eqref{eq:welfare_approx} and adding the welfare of the baseline. The relative difference between the welfare measures of the approximate and the exact methods is very small ($2.6 \times 10^{-6}$).

\begin{table}[ht]
    \centering
    \begin{tabular}{rr} \hline
    & $W$\\ \hline
       Baseline  &  -14,658,895.8\\
       Scenario, approximate   & -14,659,681.7 \\
       Scenario, exact  &  -14,659,643.6\\ \hline
    \end{tabular} 
    \caption{Welfare measures for the baseline equilibrium solution and the scenario equilibrium solution calculated with the approximate and exact method, respectively}
    \label{tab:WelfareMeasures}
\end{table}

\section{Conclusion}\label{sec:conclusion}

This paper contributes to the literature on route choice and network equilibrium by developing a sensitivity analysis framework for the perturbed utility route choice (PURC) model. By deriving closed‐form expressions for the Jacobian of optimal PURC flows with respect to link cost parameters, our analysis is not only useful for improving computational efficiency but also provides deeper insights into the substitution and complementarity patterns inherent in PURC. In contrast to traditional additive random utility models that yield only substitution effects, our results reveal that complementarity among routes may emerge in the PURC model. 

Furthermore, the extension to flow-dependent cases allows us to estimate the changes in equilibrium link flows when network parameters are shifted. The numerical examples demonstrate how sensitivity measures can be used to predict changes in equilibrium flows and corresponding changes in welfare without resolving the entire traffic equilibrium problem, as well as to quantify the uncertainty in model prediction outputs resulting from uncertainties in, e.g., estimated link cost parameters. As our large-scale case study clearly demonstrates, this is also feasible for very large networks.

In sum, our framework offers a versatile tool for analyzing the response of travelers' routing decisions in the network to parameter shifts, thereby bridging a critical gap between theoretical modeling and empirical application. 
In ongoing research, we are using Theorem \ref{thm:jacobian} to develop a regression-based estimator for the PURC model that is applicable to individual-level data. This will widen the scope for application of the PURC model considerably, as it then becomes unnecessary to aggregate trips into observed origin-destination flows as was done in \citet{fosgerau_perturbed_2022}. Moreover, individual-level estimation allows individual-level information, such as income, to be incorporated. Future research could also apply the proposed approach to real-world network design and pricing problems.

\pagebreak 

\appendix
\section{Appendix}

The following lemma establishes mathematical properties of the projection matrix $P^*$ discussed in Section \ref{sec:PURC model}. 

\begin{lemma}\label{lem:P_hat}
Define 
\begin{equation*}
P^*=B^*-(AB^*)^+ A B^*,
\end{equation*}%
Then ${P^*}$ is an orthogonal projection matrix on the linear subspace $\{x\in \R^{|\mathcal{L}|}:Ax=0,B^* x=x\}$. Moreover, 
\begin{itemize}
    \item[a)] ${P^*}={P^*}{B^*}$;
    \item[b)] $ P^* =P^{* \top} =B^{*} -B^* A^\top \left( B^*     A^\top\right)^+$.
    \item[c)]${P^*}A^{\top }=0$;
    \item[d)] $ (P^*)^2 =P^*$ ;
\end{itemize}
\end{lemma}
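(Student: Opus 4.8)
The plan is to reduce everything to two standard facts and then read off each claim by a few lines of matrix algebra. Set $M:=AB^*$ and $\Pi:=(AB^*)^+(AB^*)=M^+M$. First, recall that $B^*=\mathrm{diag}(1_{\{x^*>0\}})$ is symmetric with $(B^*)^2=B^*$: it is the orthogonal projector onto the coordinate subspace $V=\{x:B^*x=x\}$ of vectors supported on active links. Second, recall the Moore--Penrose identities for $M$: the matrix $\Pi$ is symmetric, $\Pi^2=\Pi$, it is the orthogonal projector onto the row space $\mathrm{range}(M^\top)$, and in particular $\Pi M^\top=M^\top$. The observation that glues the two halves together is $MB^*=AB^*B^*=AB^*=M$ (by $(B^*)^2=B^*$), which gives $\Pi B^*=M^+MB^*=M^+M=\Pi$, and dually $B^*\Pi=(\Pi B^*)^\top=\Pi^\top=\Pi$. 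So $P^*=B^*-\Pi$, with $\Pi$ absorbed by $B^*$ on either side.

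From here the four items are immediate. For (a): $P^*B^*=(B^*-\Pi)B^*=(B^*)^2-\Pi B^*=B^*-\Pi=P^*$. For the symmetry in (b): $P^{*\top}=(B^*)^\top-\Pi^\top=B^*-\Pi=P^*$; the stated formula then follows by transposing the definition and using $(M^+)^\top=(M^\top)^+$ together with $M^\top=(AB^*)^\top=B^*A^\top$, so that $P^*=P^{*\top}=B^*-(AB^*)^\top\big((AB^*)^+\big)^\top=B^*-B^*A^\top(B^*A^\top)^+$. For (c): $P^*A^\top=B^*A^\top-\Pi A^\top$, and since $\Pi=\Pi B^*$ we get $\Pi A^\top=\Pi B^*A^\top=\Pi M^\top=M^\top=B^*A^\top$, hence $P^*A^\top=0$. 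For (d): $(P^*)^2=(B^*-\Pi)^2=(B^*)^2-B^*\Pi-\Pi B^*+\Pi^2=B^*-\Pi-\Pi+\Pi=B^*-\Pi=P^*$.

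It remains to identify $P^*$ as the \emph{orthogonal} projector onto $S:=\{x:Ax=0,\ B^*x=x\}$. By (b) and (d) it is already symmetric and idempotent, so only $\mathrm{range}(P^*)=S$ is at issue. For $\mathrm{range}(P^*)\subseteq S$: transposing (a) gives $B^*P^*=P^*$, so $B^*(P^*y)=P^*y$; and $AP^*=(P^{*\top}A^\top)^\top=(P^*A^\top)^\top=0$ by (c) and symmetry, so $A(P^*y)=0$; thus $P^*y\in S$ for every $y$. Conversely, if $x\in S$ then $Mx=AB^*x=Ax=0$, so $P^*x=B^*x-M^+(Mx)=x-0=x$, giving $S\subseteq\mathrm{range}(P^*)$.

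The only genuinely non-mechanical point is the correct use of the characterization of $M^+M$ as the orthogonal projector onto $\mathrm{row}(M)$ --- concretely the identities $\Pi M^\top=M^\top$ and $\Pi=\Pi B^*$, the latter encoding that the pseudoinverse term is supported on active links --- and keeping track of which of $B^*$, $M$, $\Pi$ a given product collapses against. I would state these pseudoinverse facts explicitly at the outset (with a one-line justification or a pointer to a standard reference) so that the verification of each bullet is transparently a two- or three-line computation, and so that the measure-zero and continuity parts of the surrounding development can cite Lemma~\ref{lem:P_hat} cleanly.
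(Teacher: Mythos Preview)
Your proof is correct. The difference from the paper is in how the orthogonal-projection claim is established. The paper proves (a)--(c) in essentially the same spirit (for (c) it uses the transposed formula from (b) and the Moore--Penrose identity $NN^+N=N$ with $N=B^*A^\top$, which is equivalent to your $\Pi M^\top=M^\top$), but it defers (d) and instead argues variationally: it sets up the least-squares projection onto $S$, writes the first-order condition $0=y-x+A^\top\eta+(I-B^*)\mu$, applies $P^*$ and uses (a) and (c) to kill the multiplier terms, obtaining $P^*y=P^*x$, and then notes $y\in S\Rightarrow P^*y=y$. Idempotence is then read off as a corollary.

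Your route is purely algebraic: introduce $\Pi=M^+M$, record the single absorption identity $\Pi B^*=B^*\Pi=\Pi$, and then (a)--(d) and $\mathrm{range}(P^*)=S$ are each two- or three-line computations. What this buys you is that (d) comes for free and the projection claim reduces to checking the range, with no auxiliary optimization problem or Lagrange multipliers. What the paper's variational argument buys is that it mirrors the way $P^*$ is actually used later (applied to a first-order condition to annihilate dual variables), so the lemma's proof doubles as a warm-up for that step. Either approach is fine; yours is the more self-contained linear-algebra proof.
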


\begin{proof}[Proof of Lemma \ref{lem:P_hat}]                \;

\begin{itemize}
    \item[a)] It follows by the definition of $P^*$ that $ P^* B^* =P^* $.
    \item[b)]  ${P^*}$ is symmetric by the definition of the Moore-Penrose inverse. Then point b) follows by straightforward calculation, using that Moore-Penrose inversion commutes with transposition.  
    \item[c)] Note that 
    \begin{eqnarray*}
        {P^*}A^{\top } &=& {P^*}^\top B^* A^{\top } \\
        &=&    \left(B^* -B^* A^\top \left( B^*      A^\top\right)^+\right)B^* A^{\top }   \\
        &=& B A^{\top } -B^* A^\top \left( B^*    A^\top\right)^+B^* A^{\top } =0.
    \end{eqnarray*}
    
    \item[d)] Follows once we have shown that $ P^*$ is an orthogonal projection.
\end{itemize}
 
We will show that $ P^* x$ minimizes the Euclidian distance to the linear subspace $S=\{x\in \R^{|\mathcal{L}|}:Ax=0,B^* x=x\}$. This is a quadratic problem with a unique solution $y$, the orthogonal projection of $x$ on $S$, given by the first-order condition
\[
0 = y - x + A^\top \eta + (I- B^*)\mu,
\]
where $\eta$ and $\mu$ are dual variables corresponding to the two subspace constraints. Applying $ P^*$ to the first-order condition and using the properties just established shows that $ P^* y =P^* x$. Next, $y\in S$ implies that $y =P^* x$. This shows that $P^*$ is the orthogonal projection on $S$.
\end{proof}

The following lemma is used in the proof of Theorem \ref{thm:jacobian}.

\begin{lemma}\label{lem:fullrank}
 Let $A\in \mathbb R^{n\times m}$ be an arbitrary matrix and $b\in \R^n$ . Suppose $Ax=b$ has at least one solution $x^*$. Then there exists $C,d$ such that \textit{i)} $Cx=d$ if and only if $Ax=b$ and \textit{ii)} $C$ has full rank.
\end{lemma}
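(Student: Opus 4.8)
\textbf{Proof proposal for Lemma \ref{lem:fullrank}.}

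The plan is to let $C$ consist of a maximal linearly independent collection of rows of $A$, and let $d$ collect the corresponding entries of $b$. Concretely, set $r=\operatorname{rank}(A)$ and choose indices $i_1<\dots<i_r$ so that the rows $a_{i_1}^{\top},\dots,a_{i_r}^{\top}$ of $A$ are linearly independent; let $C\in\mathbb{R}^{r\times m}$ be the matrix with these rows and $d=(b_{i_1},\dots,b_{i_r})^{\top}$. By construction $C$ has full row rank, which gives part \textit{ii)}.

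For part \textit{i)}, the implication $Ax=b\Rightarrow Cx=d$ is immediate, since the equations $Cx=d$ are a subset of the equations $Ax=b$. For the converse, fix an arbitrary row index $k$. Because the chosen rows span the row space of $A$, there is $\lambda\in\mathbb{R}^{r}$ with $a_k^{\top}=\lambda^{\top}C$. Using the hypothesized solution $x^*$ of $Ax=b$ (which in particular satisfies $Cx^*=d$ by the forward implication), we get $b_k=a_k^{\top}x^*=\lambda^{\top}Cx^*=\lambda^{\top}d$. Hence, whenever $Cx=d$ holds, $a_k^{\top}x=\lambda^{\top}Cx=\lambda^{\top}d=b_k$; since $k$ was arbitrary, $Ax=b$. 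This establishes $Cx=d \iff Ax=b$.

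The argument is elementary, and the only point that needs care is the use of the consistency hypothesis: the existence of $x^*$ is exactly what forces $b_k=\lambda^{\top}d$ on the redundant rows. Without it, one could have $a_k^{\top}=\lambda^{\top}C$ while $b_k\neq\lambda^{\top}d$, and then no pair $(C,d)$ of the required form could reproduce the (inconsistent) system. Equivalently, the whole proof can be routed through the Rouché–Capelli identity $\operatorname{rank}[A\mid b]=\operatorname{rank}(A)$, which holds precisely because a solution exists; I expect that framing to be the cleanest way to write up the selection of rows and the redundancy step.
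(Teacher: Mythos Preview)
Your proof is correct. The key idea---selecting a maximal linearly independent set of rows of $A$ and using the consistency hypothesis to show that the discarded equations are redundant---is sound, and you correctly isolate where the existence of $x^*$ is used.

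The paper takes a different route: it uses the compact singular value decomposition $A=U_rD_rV_r^{\top}$ and sets $C=D_rV_r^{\top}$, $d=U_r^{\top}b$. The forward implication comes from left-multiplying $Ax=b$ by $U_r^{\top}$; the backward implication uses $U_rU_r^{\top}=AA^{+}$ together with $AA^{+}Ax^{*}=Ax^{*}=b$. Both constructions yield a matrix $C$ of full row rank, which is what the application in Theorem~\ref{thm:jacobian} actually needs (the invertibility of $CC^{\top}$). Your argument is more elementary and self-contained; the paper's SVD construction gives an explicit closed-form $(C,d)$ and dovetails with the Moore--Penrose machinery already in use throughout the appendix, but neither approach has a substantive advantage for the purpose at hand.
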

\begin{proof}
    Let $r$ denote the rank of $A$ and let $A=U_rD_rV_r^\top$ denote the compact singular value decomposition of $A$, where $D_r$ is an $r\times r$ diagonal matrix of the nonzero singular values of $A$, $U_r$ is $n\times r$ and $V_r$ is $r\times m$ with $U_r^\top U_r =V_r^\top V_r =I_r$. 

   Now let $C=D_rV_r^\top$ and $d=U_r^\top b$. If $Ax=b$ then
    \begin{eqnarray}
        U_r D_r V_r^\top x =b \Rightarrow U_r^\top U_rD_r V_r^\top x=U_r^\top b \Rightarrow D_r V_r^\top x=U_r^\top b \Rightarrow Cx=d,
    \end{eqnarray}
using that $U_r^\top U_r=I_r$. On the other hand, if $Cx=d$ then
\begin{eqnarray}
 Cx=d\Rightarrow   U_r Cx=U_r d\Rightarrow Ax=U_rU_r^\top b.
\end{eqnarray}
  Note that the pseudo-inverse $A^+ =V_r D_r^{-1}U_r^\top$, such that $A^+A=V_rV_r^\top $ and $AA^+=U_r U_r^\top$. Therefore, $U_r U_r^\top b=AA^+b$. 
  Finally, since there exists $x^*$ such that $Ax^*=b$ we have that $AA^+b=AA^+Ax^*=Ax^*=b$, by properties of pseudo inverse $AA^+A = A$. Hence, we have that $Cx=d$ implies $Ax=b$. This completes the proof.
\end{proof}

\section{O-minimal structures and definability}\label{app:definable}

This section gives a very brief introduction to o-minimal structures and the concept of definability, leading up to a lemma showing that the optimal flow mapping $c\rightarrow x^*(c)$ is definable. We use this lemma to establish the claim in Theorem \ref{thm:jacobian} that the activation boundary $\mathcal C_0$ is a  Lebesgue null set. 

An o-minimal structure over the real line is a collection $\mathcal S=(\mathcal S^n)_{n\in \mathbb N}$, where each $\mathcal S^n$ is a set of subsets of $\R^n$ satisfying certain axioms \citep[][Def. 1.4]{coste1999introduction}. The collection $\mathcal S$ is constructed to be large enough that much interesting mathematical analysis is possible within this universe, but small enough that pathologies such as the Cantor sets are ruled out; it provides a "tame" topology \citep{van1998tame}. 

The elements of $\mathcal S^n$ are called the \emph{definable} subsets of $\R^n$. A map $f:A\rightarrow \R^p$ (where $A \subseteq \R^n$) is called definable if its graph is a definable subset of $\R^{n+p}$.

In general, the elements of $\mathcal S^1$ are precisely the finite unions of points and intervals. A variety of results exists for constructing definable sets and functions within an o-minimal structure. 

Several o-minimal structures are known. The most relevant for our purposes is $\mathbb{R}_{\mathrm{exp}}$, which allows perturbation functions constructed from polynomials and the exponential function~\citep[][p.67]{coste1999introduction}. The graph of the logarithm function is definable as it is the same as the graph of the exponential, with coordinates swapped. Hence, the logarithm function is also allowed. 

The proof of Theorem \ref{thm:jacobian}, item 3, refers to the following lemma.

\begin{lemma}\label{lem:definable}
    Fix an o-minimal structure on $\R$. If the perturbation function $F$ is definable in this structure, then the optimal flow  $x^*$ is a  definable function of $c$.
\end{lemma}
\begin{proof}
Since $F$ is definable, the PURC objective $c^\top x +F(x)$ is also definable. The constraint that the flow belongs to the set $D:=\{x: Ax=b, x \geq 0\}$ is also definable.

Consider the graph of the $\argmin$ correspondence:
\[
\Gamma_{\argmin} = \left\{(x,c)\in \mathbb{R}^{|\mathcal{L}|}\times \mathbb{R}^{|\mathcal{L}|} : x\in D, \forall z \in D: c^\top x +F(x) \leq c^\top z +F(z) \right\}. 
\]
This is definable, since $D$ and the PURC objective are definable. Hence, the  $\argmin$ correspondence is definable.

Now, the optimal flow function $x^*(c)$ is the unique minimizer of the PURC objective and hence it is definable. 
\end{proof}

\bibliographystyle{jfe}
\bibliography{MFzotero,mybib, bibTkra}

\end{document}